\appto{\bibsetup}{\raggedright}
\def\trans{\mathrm{T}}
\theoremstyle{definition}
\newtheorem{definition}{Definition}[section]
\newtheorem{theorem}{Theorem}
\newtheorem{remark}{Remark}
\newtheorem{lemma}{Lemma}
\newtheorem{corollary}{Corollary}
\newcommand{\Chat}{\widehat{\mathbf{C}}} 
\newcommand{\Ahat}{\widehat{\mathbf{A}}} 
\newcommand{\Bhat}{\widehat{\mathbf{B}}} 
\newcommand{\Dhat}{\widehat{\mathbf{D}}} 
\newcommand{\Atilde}{\tilde{\mathbf{A}}}
\newcommand{\Btilde}{\tilde{\mathbf{B}}}
\newcommand{\Ctilde}{\tilde{\mathbf{C}}}
\newcommand{\Dtilde}{\tilde{\mathbf{D}}}
\newcommand{\I}{\mathbf{I}} 
\newcommand{\mbf}[1]{\mathbf{#1}} 
\newcommand{\bmat}[1]{\begin{bmatrix} #1 \end{bmatrix}} 
\newcommand{\Hcal}{\mathcal{H}} 
\newcommand{\xhat}{\hat{\mathbf{x}}} 
\newcommand{\A}{\mathbf{A}} 
\newcommand{\B}{\mathbf{B}} 
\newcommand{\C}{\mathbf{C}} 
\newcommand{\D}{\mathbf{D}} 
\newcommand{\x}{\mathbf{x}} 
\newcommand{\bu}{\mathbf{u}} 
\newcommand{\y}{\mathbf{y}} 
\newcommand{\yhat}{\hat{\mathbf{y}}} 
\newcommand{\z}{\mathbf{z}} 
\newcommand{\bP}{\mathbf{P}} 
\newcommand{\Q}{\mathbf{Q}} 
\newcommand{\tr}{\mbox{tr}} 
\newcommand{\K}{\mbf{K}} 
\newcommand{\E}{\mbf{E}} 
\newcommand{\bS}{\mbf{S}} 
\newcommand{\R}{\mbf{R}}
\newcommand{\He}{\mbox{He}}
\newcommand{\M}{\mbf{M}}
\newcommand{\W}{\mbf{W}}
\newcommand{\V}{\mbf{V}}
\newcommand{\X}{\mbf{X}}
\newcommand{\Gcal}{\mathcal{G}}
\newcommand{\zero}{\mbf{0}}
\newcommand{\Z}{\mbf{Z}}
\newcommand{\T}{\mbf{T}}
\newcommand{\bU}{\mbf{U}}
\newcommand{\Y}{\mathbf{Y}}
\newcommand{\Ell}{\mathcal{L}}
\newcommand{\nhat}{\hat{n}}
\newcommand{\uhat}{\hat{\bu}}
\newcommand{\optSpace}{\hspace{3mm}}
\newcommand{\br}{\mbf{r}}
\newcommand{\e}{\mbf{e}}
\newcommand{\rhat}{\hat{\br}}
\newcommand{\U}{\mathbf{U}}
\newcommand{\F}{\mathbf{F}}
\newcommand{\loss}{\text{J}}
\DeclareMathOperator*{\argmin}{arg\,min}
\acrodef{dl}[{DL}]{deep learning}
\acrodef{rl}[{RL}]{reinforcement learning}
\acrodef{nn}[{NN}]{neural network}
\acrodef{dnn}[{DNN}]{deep neural network}
\acrodef{tdl}[{TDL}]{temporal difference learning}
\acrodef{pid}[{PID}]{proportional–integral–derivative}
\acrodef{us}[{US}]{Ultrasound}
\acrodef{mse}[{MSE}]{mean squared error}
\acrodef{sgd}[{SGD}]{stochastic gradient descent}
\acrodef{ico}[{ICO}]{iterative convex overbounding}
\acrodef{lmi}[{LMI}]{linear matrix inequality}
\acrodef{mjls}[{MJLS}]{Markov jump linear system}
\acrodef{io}[{IO}]{input-output}
\acrodef{iqc}[{IQC}]{integral quadratic constraints}
\acrodef{cnn}[{CNN}]{convolutional neural network}
\acrodef{il}[{IL}]{Imitation learning}
\acrodef{mpc}[{MPC}]{model predictive control}
\acrodef{nmpc}[{NMPC}]{nonlinear model predictive controller}
\acrodef{sdp}[{SDP}]{semi-definite programming}
\acrodef{relu}[{ReLU}]{rectified linear unit}
\acrodef{us}[US]{ultrasound}
\acrodef{mdp}[MDP]{Markov Decision Process}
\acrodef{iid}[iid]{identical and independently distributed random variable}
\acrodef{pid}[PID]{Proportional Integral Derivative}
\acrodef{lqr}[LQR]{linear-quadratic regulator}
\acrodef{ico}[ICO]{iterative convex overbounding}
\acrodef{pgd}[PGD]{projected gradient descent}
\acrodef{n4sid}[N4SID]{numerical algorithms for subspace state space system identification}
\acrodef{io}[IO]{input-output}
\acrodef{mae}[MAE]{mean absolute error}
\acrodef{mse}[MSE]{mean square error}
\acrodef{gd}[GD]{gradient descent}
\crefname{prob}{Problem}{Problems}
\crefname{cons}{Constraint}{Constraints}
\crefname{alg}{Algorithm}{Algorithms}
\begin{document}

\title{Dissipative Imitation Learning for Discrete Dynamic Output Feedback Control with Sparse Data Sets}
\author{Amy K. Strong, Ethan J. LoCicero, Leila J. Bridgeman}

\maketitle
\begin{abstract}
Imitation learning enables the synthesis of controllers for complex objectives and highly uncertain plant models. However, methods to provide stability guarantees to imitation learned controllers often rely on large amounts of data and/or known plant models. In this paper, we explore an \ac{io} stability approach to dissipative imitation learning, which achieves stability with sparse data sets and with little known about the plant model. A closed-loop stable dynamic output feedback controller is learned using expert data, a coarse \ac{io} plant model, and a new constraint to enforce dissipativity on the learned controller. While the learning objective is nonconvex, \ac{ico} and \ac{pgd} are explored as methods to successfully learn the controller. This new imitation learning method is applied to two unknown plants and compared to traditionally learned dynamic output feedback controller and neural network controller. With little knowledge of the plant model and a small data set, the dissipativity constrained learned controller achieves closed loop stability and successfully mimics the behavior of the expert controller, while other methods often fail to maintain stability and achieve good performance.
\end{abstract}
\section{Introduction}\label{sec1}

Complex desired behaviors and inaccurate plant models can make it difficult to formulate controller objectives. In these cases, imitation learning can be an effective alternative to traditional controller synthesis methods. Imitation learning bypasses direct controller design and instead uses expert demonstrations of desired system behavior to learn a system's reward function or directly learn a controller \cite{Osa2018}. Often no assumptions are made about the expert -- it may be some pre-designed controller or a human operator directly manipulating the system. This flexibility enables learning controllers that mimic behaviors difficult to directly characterize. However, unconstrained imitation learning does not provide stability guarantees. This paper proposes a new dissipativity-based approach to provide robust stability guarantees for a controller learned from \acf{io} expert data, using the coarse \ac{io} properties of a linear or nonlinear plant.

The simplest form of imitation learning, behavior cloning, is a supervised learning problem in which a mapping from state to action is learned through minimization of a loss function \cite{Osa2018}. While behavior cloning can create a policy that imitates a stable expert, there are no inherent stability properties of the learned policy. Once trained, if the learned controller encounters scenarios outside of the training data distribution, it may produce unhelpful, unexpected, or even catastrophic inputs, leading to compounding errors and instability \cite{Osa2018, codevilla2018end}. While issues of instability can be heuristically addressed through increasing the amount of training data or encouraging controller outputs that maintain the training data distribution \cite{kang2022lyapunov}, these solutions create a high data collection burden and do not give any theoretical stability guarantees. 

In situations that demand stability, robust control theory
is being applied to imitation learning \cite{Yin2022,Makdah2021,Palan2020,Havens2021, Pauli2021, Chen2018, Revay2020, Donti2021, east2022imitation, tu2022sample}. Recent efforts towards constrained imitation learning provide closed-loop stability or robustness guarantees for nominal linear time invariant (LTI)  and nonlinear plant models. Lyapunov stability theory was used with quadratic constraints to maximize the region of attraction of a closed loop system of an LTI plant and neural network feedback controller, while minimizing loss \cite{Yin2022}. Similarly, when fitting a policy to expert linear quadratic regulator (LQR) demonstrations, Lipschitz constraints were imposed on loss to ensure stability of feedback control \cite{Makdah2021}. Robust imitation learning has also been applied to linear feedback control policies through incorporation of prior knowledge about the expert demonstrator or the system itself. In Reference \cite{Palan2020}, a Kalman constraint was enforced on the learning process, which assumed that the expert demonstrator was LQR optimal. This work was later extended to some types of nonlinear plant models \cite{east2022imitation}. In Reference \cite{Havens2021}, robust stability was enforced by imposing a threshold on the closed-loop $\text{H}_{\infty}$ norm of a linear plant model during learning. In both References \cite{Palan2020} and \cite{Havens2021}, a stable policy was learned with a small amount of expert data. While model uncertainty was addressed in Reference \cite{Havens2021}, it remains unclear how to select an appropriate closed-loop $\text{H}_{\infty}$ norm if the plant model remains highly uncertain, which is an important use case of learning-based control. In fact, all these approaches are limited to cases where accurate models are available, and while alternative approaches that do not require a known plant model have been proposed \cite{sultangazin2021exploiting}, an assumed structure of the plant is still required. In contrast, \ac{io} stability theory relies only on coarse knowledge of the plant for stability guarantees - circumventing the need for accurate state-space models or restrictive assumptions.

This paper builds upon References \cite{Palan2020} and \cite{Havens2021}, but moves beyond the Small Gain Theorem, which underpins $\text{H}_{\infty}$ control, to use more general \ac{io} stability criteria to ensure stability of learned controllers. This is expected to be most helpful when the time-invariant plant model is unreliable and data is sparse. Using an \ac{io} perspective, the plant and controller can be modeled as mappings from inputs to outputs, and certain open-loop \ac{io} properties can be used to infer closed-loop stability. Importantly, these \ac{io} properties can often be shown from first principles to hold for nonlinear, time varying, and uncertain parameters, circumventing the problem of unreliable models altogether. Consider the Passivity Theorem \cite{Vidyasagar1977}, which shows that two strictly passive systems in negative feedback are stable. Many nonlinear physical systems are known to be passive for any set of parameters, so even if a passive plant is not well-modeled, it must be stabilized by a passive controller \cite{Geromel1997,Brogliato2007}. This approach to stability analysis has been generalized to passivity indices \cite{Vidyasagar1977}, conic sectors \cite{Zames1966}, dissipativity \cite{Hill1977}, and further \cite{Megretski1997,Safonov1980}. While these analysis and stabilization techniques have most often been applied to continuous-time systems, this paper offers a discrete-time approach more applicable to control design via data driven methods.

Designing optimal controllers that are constrained to satisfy desirable \ac{io} properties for robust stability is an active area of research. In particular, References \cite{Geromel1997} and \cite{Forbes2019} explore $\Hcal_2$-passive designs, Reference \cite{Sivaranjani2018} applied $\Hcal_2$-conic design to power system stabilization, and References \cite{Xia2020} and \cite{Scorletti2001} developed $\Hcal_\infty$-conic and $\Hcal_\infty$-dissipative designs, respectively. In these designs, an \ac{io} property is imposed on the controller as a linear matrix inequality (LMI) constraint during the performance optimization. However, none of these methods have been applied to robust imitation learning. 

Here, the dissipative imitation learning problem is posed to learn a discrete linear dynamic output feedback controller using only \ac{io} data and initial state information from a discrete-time plant while enforcing a desired QSR-dissipativity property -- extending results from Reference \cite{Strong2022} to learn the entirety of a discrete dynamic output feedback controller from \ac{io} data rather than relying on state estimates. The QSR-dissipativity framework encompasses passivity, passivity indices, bounded gain, and conic sectors \cite{Vidyasagar1981}, which allows for diverse applications of the control learning process. The behavior cloning objective is combined with an LMI constraint on the controller that enforces a desired QSR-dissipativity property. However, the proposed objective function is nonconvex. \Acf{ico} and \acf{pgd} are explored as methods to learn the controller, all while guaranteeing closed-loop stability using open-loop, coarse \ac{io} plant analysis. These methods are tested in simulation to learn expert controllers for both linear and nonlinear plant models.

\section{Preliminaries}
\subsection{Notation}
For a square matrix, $\bP \succ0$ denotes positive definite. Related properties (negative definiteness and positive/negative semi-definiteness) are denoted likewise. The identity matrix, zero matrix, and trace are $\I$, $\zero$, and $\tr(\cdot)$. Duplicate blocks in symmetric matrices are denoted $(*)$.  The quadruple $(\A,\B,\C,\D)$ denotes the LTI state space of a discrete-time system, $\x_{k+1} = \A\x_K + \B\bu_k$, $\y_k = \C\x_k + \D\bu_k$, with states $\x \in \mathbb{R}^n$, inputs $\bu \in \mathbb{R}^m$, and outputs $\y \in \mathbb{R}^p$. The system is defined on $k \in \mathbb{N}_+$.

Let $\Hcal$ be any Hilbert space with an inner product, $\langle f,\, g \rangle$, that is linear in its terms and with an induced norm $||\cdot|| = \sqrt{\langle f,\, g \rangle};$ this includes the $\Ell_2$ Lebesgue space and the $\ell_2$ sequence space. The $\ell_2$ and Frobenius norms are denoted $||\cdot||_2$ and $||\cdot||_F$. Recall, $\y \in \ell_2$ if $||\y||_2^2 = \langle \y,\, \y \rangle = \sum_{k=0}^\infty \y^T(k)\y(k) <\infty$, $k \in \mathbb{N}_+.$ Furthermore, $\y\in \ell_{2e}$, the extended $\ell_2$ space, if its truncation to $k\in[0,\,T]$ is in $\ell_2$ $\forall$ $T\geq 0$. Similarly, $\y \in \Ell_2$ if $||\y||_2^2 = \langle \y,\, \y \rangle = \int_0^\infty \y^T(t)\y(t)dt <\infty$, and $\y\in \Ell_{2e}$ if its truncation to $t\in[0,\,T]$ is in $\Ell_2$ $\forall$ $T\geq 0$. Extended Hilbert spaces are referred to generally as $\Hcal_e.$ Superscripts will be added to specify the dimension of sequences when needed to avoid ambiguity.

\subsection{Dissipativity and Special Cases} \label{sec:review}
A dissipative system is one that cannot produce energy; its stored energy is instead bounded from below and above by the its maximum extractable energy and the required supply \cite{Willems1972}. A special case of dissipativity, QSR-dissipativity, established a quadratic storage function for individual and interconnected control affine state-space systems, which resulted in relatively straightforward analysis for the Lyapunov stability of individual and interconnected QSR-dissipative systems \cite{Hill1976, Hill1977}. Historically, more focus has been designated to continuous-time dynamical systems. However, similar definitions for dissipativiy and QSR-dissipativity have been established for discrete-time control affine state space systems \cite{Byrnes1994, haddad2004vector}.

Vidyasagar \cite{Vidyasagar1981} constructed a broader definition of QSR-dissipativity as an \ac{io} property of an operator that maps between two Hilbert spaces. While originally defined specifically for an operator mapping between two $\Ell_2$ spaces, a more general definition is offered below. 
\begin{definition}\textit{(QSR-dissipativity \cite{Vidyasagar1981})}\label{def:qsr_def}
The operator $\Gcal: \Hcal_e^m \rightarrow \Hcal_e^p$ is dissipative with respect to the supply rate $w(\bu, \y) = \langle \y,\, \Q\y \rangle_T +\langle \y,\, \bS\bu \rangle_T + \langle \bu,\, \R\bu \rangle_T,$ if for all $\bu \in \Hcal_e^n$ and all $T > 0,$ $w(\bu, \Gcal\bu) \geq \beta,$ where $\beta$ is defined by the initial conditions.
\end{definition} 
The \ac{io} operator perspective avoids the need for a state space representation of dynamical systems. Moreover, there is no longer a need to distinguish between dissipativity and \ac{io} stability (defined below)  for continuous or discrete time systems, as both can be defined as mappings between Hilbert spaces, $\Ell_2$ or $\ell_2$ respectively \cite{carrasco2013towards}. 
\begin{definition}
    \textit{($\Hcal$ Input-Output (IO) Stability \cite{Vidyasagar1981})} An operator $\Gcal:\Hcal_e^n\rightarrow\Hcal_e^m$ is input-output stable if any input $\bu\in\Hcal^n$ maps to an output $\y\in\Hcal^m$.
\end{definition}
The \ac{io} stability of a QSR-dissipative operator can be established through the characteristics of the system's $\Q$ matrix. 
\begin{lemma}\textit{(QSR Input-Output Stability \cite{Vidyasagar1981})}
Suppose $\Gcal: \Hcal_e^m\rightarrow\Hcal_e^p$ is QSR dissipative. If $\Q$ is negative definite, then $\Gcal$ is \ac{io} stable.
\end{lemma}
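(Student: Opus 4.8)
The plan is to turn the dissipation inequality into an a priori bound on the truncated output energy. The strict negative definiteness of $\Q$ supplies a coercive term $-\epsilon\norm{\y_T}^2$ that dominates the cross term $\langle\y,\,\bS\bu\rangle_T$ and the input term $\langle\bu,\,\R\bu\rangle_T$, the latter being controlled by the (finite) norm of the input. A uniform-in-$T$ bound on $\norm{\y_T}$ then follows by solving a scalar quadratic inequality, and passing to the limit $T\to\infty$ gives $\y=\Gcal\bu\in\Hcal^p$.

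Concretely, fix an arbitrary $\bu\in\Hcal^m$, set $\y=\Gcal\bu$, and write $(\cdot)_T$ for truncation to $[0,T]$. Since $\Q\prec0$ is a fixed symmetric matrix, there is $\epsilon>0$ with $\Q\preceq-\epsilon\I$, hence the pointwise bound $\y^{\trans}(t)\Q\,\y(t)\le-\epsilon\,\y^{\trans}(t)\y(t)$ integrates (or sums) to $\langle\y,\,\Q\y\rangle_T\le-\epsilon\norm{\y_T}^2$ for every $T>0$. Substituting into the inequality $w(\bu,\Gcal\bu)\ge\beta$ of Definition~\ref{def:qsr_def} and rearranging,
\[
\epsilon\norm{\y_T}^2\;\le\;-\beta+\langle\y,\,\bS\bu\rangle_T+\langle\bu,\,\R\bu\rangle_T .
\]
Applying the Cauchy--Schwarz inequality on the truncated Hilbert space together with submultiplicativity of the induced norm gives $\langle\y,\,\bS\bu\rangle_T\le\norm{\bS}\,\norm{\y_T}\,\norm{\bu_T}$ and $\langle\bu,\,\R\bu\rangle_T\le\norm{\R}\,\norm{\bu_T}^2$, and since $\bu\in\Hcal^m$ we have $\norm{\bu_T}\le\norm{\bu}$ for all $T$. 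Writing $a:=\norm{\y_T}$ therefore yields the scalar quadratic inequality
\[
\epsilon\,a^2-\norm{\bS}\,\norm{\bu}\,a-\bigl(\norm{\R}\,\norm{\bu}^2-\beta\bigr)\;\le\;0 ,
\]
whose coefficients do not depend on $T$. Because $\epsilon>0$, this forces $a$ below the larger root of the associated quadratic, so $\norm{\y_T}\le M$ for all $T>0$ with $M$ depending only on $\epsilon,\norm{\bS},\norm{\R},\beta,\norm{\bu}$. Since the truncated norms increase to the full norm, $\norm{\y_T}\nearrow\norm{\y}$ as $T\to\infty$ (valid in $\ell_2$ and $\Ell_2$ alike), we conclude $\norm{\y}\le M<\infty$, i.e.\ $\Gcal\bu\in\Hcal^p$. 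As $\bu$ was arbitrary, $\Gcal$ maps $\Hcal^m$ into $\Hcal^p$, establishing \ac{io} stability.

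The one point needing care is the final quantitative step: the initial-condition constant $\beta$ may have either sign, so one must check that the bound $M$ extracted from the quadratic formula is genuinely finite and horizon-independent regardless of that sign. This is routine once $\epsilon>0$ makes the leading coefficient positive, but it is exactly where the strict hypothesis $\Q\prec0$ (rather than merely $\Q\preceq0$) is used: a zero eigenvalue of $\Q$ would destroy the coercivity needed to absorb the cross term $\langle\y,\,\bS\bu\rangle_T$, and no uniform output bound could be recovered.
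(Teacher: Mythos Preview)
Your argument is correct and is precisely the classical energy-balance proof from Vidyasagar that the paper invokes: the paper's own proof consists only of a pointer to Lemma~3.2.10 in \cite{Vidyasagar1981} together with the observation that nothing in that argument uses more than linearity of the inner product, so it carries over from $\Ell_{2e}$ to any $\Hcal_e$. You have simply written that argument out in full, so there is no substantive difference in approach.
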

\begin{proof}
    While Lemma 3.2.10 in Reference \cite{Vidyasagar1981} exclusively proves this result in the $\Ell_{2e}$ space, the proof requires an inner product space with an inner product linear in its terms, as the proof relies on the additive properties of integrals and summations. Therefore, any Hilbert space using its induced norm qualifies.
\end{proof}
Beyond the open loop properties of the system, QSR-dissipativity can be used to establish the \ac{io} stability of a closed-loop systems as well. The Network QSR Lemma \cite{Vidyasagar1981} provides a method to establish global QSR properties and stability for any number of QSR-dissipative subsystems. The following theorem will state only the requirements for two operators in negative feedback, as it is most relevant to this paper.
\begin{theorem}\label{thm:QSR}\textit{(QSR Input-Output Stability Theorem \cite{Vidyasagar1981}\cite{Hill1977}).} 
Consider two operators $\Gcal_i:\Hcal_e^m\rightarrow\Hcal_e^p$ that are $(\Q_i,\bS_i,\R_i)$-dissipative for $i=1,2$. Let their negative feedback interconnection be defined as $\bu_1=\br_1-\y_2$ and $\bu_2=\br_2+\y_1$. Then the closed loop from $\br^T=[\br_1^T,\,\br_2^T]$ to $\y^T = [\y_1^T,\,\y_2^T]$ is $\Ell_{2}$ stable if there exists $\alpha>0$ such that 
\begin{equation*}\label{eqn:QSRc}
    \bmat{\Q_1+\alpha\R_2 & -\bS_1+\alpha\bS_2^T \\ * & \R_1 + \alpha\Q_2}  \prec 0.
\end{equation*}
\end{theorem}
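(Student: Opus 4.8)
The plan is to run the standard weighted dissipativity-composition argument (an S-procedure in disguise): add the two supply-rate inequalities with the positive weight $\alpha$, substitute the interconnection equations $\bu_1=\br_1-\y_2$, $\bu_2=\br_2+\y_1$, and use negative definiteness of the displayed matrix to bound the truncated output energy by the truncated input energy uniformly in $T$ --- which is precisely $\Ell_{2}$ stability of the map $\br\mapsto\y$.

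First I would instantiate Definition~\ref{def:qsr_def} for each subsystem: writing $\y_i=\Gcal_i\bu_i\in\Hcal_e^p$, for every $T>0$ we have $\langle\y_1,\Q_1\y_1\rangle_T+\langle\y_1,\bS_1\bu_1\rangle_T+\langle\bu_1,\R_1\bu_1\rangle_T\geq\beta_1$ together with the analogue for $\Gcal_2$ with constant $\beta_2$, where $\beta_1,\beta_2$ are fixed by the subsystems' initial conditions and do not depend on $T$ or on $\br$. For $\br\in\Ell_2$, the signals $\bu_1=\br_1-\y_2$, $\bu_2=\br_2+\y_1$, $\y_1$, and $\y_2$ all lie in $\Hcal_e$ --- this is the implicit well-posedness of the interconnection, as in \cite{Vidyasagar1981} --- so both inequalities apply. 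Adding the first to $\alpha$ times the second (legitimate since $\alpha>0$) yields $w_1(\bu_1,\y_1)+\alpha\,w_2(\bu_2,\y_2)\geq\beta_1+\alpha\beta_2=:\beta$, a constant independent of $T$.

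Next I would substitute the interconnection equations into the left-hand side and regroup the terms. Using $\Q_i=\Q_i^\trans$, $\R_i=\R_i^\trans$, and $\langle\y_2,\bS_2\y_1\rangle_T=\langle\y_1,\bS_2^\trans\y_2\rangle_T$, the contributions that are purely quadratic in $\y:=[\y_1^\trans,\y_2^\trans]^\trans$ collect --- up to the scalar factor fixed by the cross-term convention of the supply rate --- into $\langle\y,\M\y\rangle_T$, with $\M$ the matrix in the statement; the remaining terms are either bilinear in $(\y,\br)$, of the form $\langle\y,\mbf{N}\br\rangle_T$ for some bounded $\mbf{N}$, or purely quadratic in $\br$. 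Since $\M\prec0$, choose $\epsilon>0$ with $\M\preceq-\epsilon\I$, so $\langle\y,\M\y\rangle_T\leq-\epsilon\langle\y,\y\rangle_T$. Bounding each bilinear term via Cauchy--Schwarz and Young's inequality as $|\langle\y,\mbf{N}\br\rangle_T|\leq\tfrac{\epsilon}{4}\langle\y,\y\rangle_T+c\,\langle\br,\br\rangle_T$, and each $\br$-quadratic term by a multiple of $\langle\br,\br\rangle_T$, and then absorbing the $\langle\y,\y\rangle_T$ pieces into $-\epsilon\langle\y,\y\rangle_T$, produces constants $c_1>0$, $c_2\geq0$ with $\beta\leq-c_1\langle\y,\y\rangle_T+c_2\langle\br,\br\rangle_T$ for all $T>0$.

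Finally, rearranging gives $\langle\y,\y\rangle_T\leq(c_2\langle\br,\br\rangle_T-\beta)/c_1$ for all $T$. If $\br\in\Ell_2$, the right-hand side is bounded above by $(c_2\norm{\br}_2^2-\beta)/c_1<\infty$ uniformly in $T$, so letting $T\to\infty$ yields $\norm{\y}_2^2<\infty$, i.e.\ $\y\in\Ell_2$; thus $\br\mapsto\y$ is $\Ell_{2}$ stable. I expect the only genuine obstacle to be bookkeeping: performing the substitution without a sign slip, correctly identifying which regrouped terms make up $\langle\y,\M\y\rangle_T$ (and with what scalar, given the supply-rate normalization used in Definition~\ref{def:qsr_def}), and choosing the Young's-inequality weights tightly enough that all cross terms are absorbed while leaving a strictly negative coefficient on $\langle\y,\y\rangle_T$. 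The entire conceptual content sits in the hypothesis $\M\prec0$; everything else is the discrete/continuous-agnostic norm manipulation already used for the open-loop QSR stability lemma.
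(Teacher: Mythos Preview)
Your proposal is correct and reproduces the standard weighted-sum / S-procedure argument that underlies the cited result. The paper itself does not give an independent proof: it simply invokes Theorem~12 of \cite{Vidyasagar1981} (stated there for $\Ell_{2e}$) and remarks that, since the argument uses only linearity of the inner product and the induced norm, it carries over verbatim to any Hilbert space $\Hcal_e$. Your write-up supplies precisely those details --- combining the two supply-rate inequalities with weight $\alpha$, substituting the interconnection, identifying the $\y$-quadratic form with the displayed matrix, and closing with Cauchy--Schwarz/Young --- and your caveat about the cross-term normalization in Definition~\ref{def:qsr_def} is well placed (the paper's supply rate has a single $\langle\y,\bS\bu\rangle_T$, so the off-diagonal of $\M$ appears with a factor that must be tracked, but this does not affect the conclusion since $\M\prec0$ is the hypothesis either way).
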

\begin{proof}
    Theorem 12 \cite{Vidyasagar1981} is proved exclusively in the $\Ell_{2e}$ space. However, the same requirements as stated in the proof of Lemma \ref{thm:QSR} enable an extension into any Hilbert space using its induced norm.
\end{proof}

QSR-dissipativity has many special cases including conic sectors \cite{Zames1966,Bridgeman2016}, passivity \cite{Brogliato2007}, and bounded gain \cite{Desoer1975}. Their relations to QSR-dissipativity are defined in \autoref{tbl:cases}. Using Theorem \ref{thm:QSR}, the \ac{io} stability of a negative feedback system of operators with specific QSR-dissipativity characteristics can be established.

\begin{table}
\centering
    \begin{tabular}{|c | c | c | c|}
        \hline
        Case & QSR & LMI \\
        \hline
        Passive & $\Q=\zero$, $\bS=\frac{1}{2}\I$, $\R=\zero$ & (\ref{eqn:KYPpass}) or (\ref{eqn:KYPdiss}) \\ 
        \hline
        $\gamma$-Bounded Gain & $\Q=-\I$, $\bS=\zero$, $\R=\gamma^2\I$ & (\ref{eqn:QSRc}) \\
        \hline
        Nondegenerate & &  \\ Interior Conic & $\Q=-\I$, $\R=-ab\I$ & \cite{Bridgeman2014}  \\ $a<0<b$ &$\bS=\frac{a+b}{2}\I$  & \\
        \hline
        Degenerate & &  \\ Interior Conic & $\Q=\zero$, $\R=-d\I$ & \cite{Bridgeman2014}  \\ $d<0$ &$\bS=\frac{1}{2}\I$  &  \\
        \hline
        QSR-dissipative & $\Q=\zero$, $\R \succeq0$, any $\bS$ & (\ref{eqn:QSRc}) or (\ref{eqn:KYPdiss})\\
        & $\Q \prec0$, $\R \succeq 0$ any $\bS$ &   \\
        \hline
    \end{tabular}
    \caption{A summary of important cases of QSR-dissipativity, their formulations, and the LMI(s) used to impose the property on an LTI system. This updates Table 1\cite{Hill1977} to include degenerate conic systems.}
    \label{tbl:cases}
\end{table}

While QSR properties of a system can be established and used for stability analysis, the design of systems that have desired QSR properties is also possible using \acp{lmi}. For discrete LTI systems, the discrete KYP Lemma provides conditions to test or design a system that satisfies passivity or QSR-dissipativity characteristics.
\begin{lemma}\label{lem:KYPpass}
\textit{(Discrete KYP Lemma \cite{Hitz})} A square stable LTI system $\Gcal:\ell_{2e}\rightarrow\ell_{2e}$ with minimal state space realization $(\A,\B,\C,\D)$ is passive if and only if there exists $\bP  \succ 0$ such that
\begin{equation}\label{eqn:KYPpass}
    \bmat{\A^T \bP \A - \bP & \A^T\bP\B - \C^T \\
    \B^T\bP \A - \C & \B^T\bP\B - (\D^T + \D)}  \preceq 0.
\end{equation}
    
\end{lemma}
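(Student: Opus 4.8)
The plan is to read the displayed inequality as the discrete-time \emph{dissipation inequality} for the quadratic storage function $V(\x)=\x^{T}\bP\x$ together with the passivity supply rate, and then to prove the two implications separately. Recall from \autoref{tbl:cases} that passivity is the QSR property with $\Q=\zero$, $\bS=\tfrac12\I$, $\R=\zero$, so the relevant per-step supply is a scalar multiple of $\y_k^{T}\bu_k$; I will use the symmetrised form $s_k := \y_k^{T}\bu_k+\bu_k^{T}\y_k=2\y_k^{T}\bu_k$ so that it matches the structure of the LMI.

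\textbf{Sufficiency.} Suppose $\bP\succ0$ satisfies the LMI. Substituting $\x_{k+1}=\A\x_k+\B\bu_k$ and $\y_k=\C\x_k+\D\bu_k$ and collecting terms gives the algebraic identity
\begin{equation*}
 s_k-\bigl(V(\x_{k+1})-V(\x_k)\bigr)=-\bmat{\x_k\\ \bu_k}^{T}\bmat{\A^{T}\bP\A-\bP & \A^{T}\bP\B-\C^{T}\\ \B^{T}\bP\A-\C & \B^{T}\bP\B-(\D^{T}+\D)}\bmat{\x_k\\ \bu_k}.
\end{equation*}
Hence the LMI is \emph{equivalent} to $V(\x_{k+1})-V(\x_k)\le s_k$ for every $(\x_k,\bu_k)$. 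Summing over $k=0,\dots,T$ telescopes the left side to $V(\x_{T+1})-V(\x_0)\ge-\x_0^{T}\bP\x_0$, so $\sum_{k=0}^{T}s_k\ge-\x_0^{T}\bP\x_0$ for all $T$, which is exactly QSR-dissipativity in the sense of Definition~\ref{def:qsr_def} with bias $\beta$ a quadratic function of the initial state $\x_0$. Note that only $\bP\succeq0$ is used here.

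\textbf{Necessity.} Assume $\Gcal$ is passive. Following Willems' construction of a storage function, define the available storage
\begin{equation*}
 V_a(\x_0)=\sup\Bigl\{-\textstyle\sum_{k=0}^{T}s_k \;:\; T\ge0,\ \bu\in\ell_{2e}\Bigr\},
\end{equation*}
the supremum being over state trajectories started from $\x_0$. Passivity of $\Gcal$ makes this supremum finite, the input $\bu\equiv\zero$ makes it nonnegative, and a dynamic-programming argument shows $V_a$ satisfies the dissipation inequality $V_a(\x_{k+1})-V_a(\x_k)\le s_k$. The substantive step is to show that $V_a$ is itself a quadratic form: by rescaling an optimising input one gets $V_a(\lambda\x)=\lambda^{2}V_a(\x)$, and superposing trajectories via reachability of $(\A,\B)$ promotes this to $V_a(\x)=\x^{T}\bP\x$ with $\bP=\bP^{T}\succeq0$. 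Substituting $V=V_a$ into the dissipation inequality and reading the algebraic identity above backwards then yields the LMI. Finally, minimality --- reachability of $(\A,\B)$ together with observability of $(\C,\A)$ --- rules out a nontrivial kernel of $\bP$ and upgrades $\bP\succeq0$ to $\bP\succ0$; the stability hypothesis is what guarantees $\Gcal$ maps $\ell_2$ into $\ell_2$, so that $V_a$ is finite along passive trajectories.

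\textbf{Where the difficulty lies.} The sufficiency direction and the algebraic identity are routine bookkeeping. The real work sits in the necessity direction: proving that the available storage is \emph{exactly} quadratic and that the resulting $\bP$ is strictly positive definite rather than merely semidefinite. This is where the LTI structure, reachability, observability, and minimality are all genuinely invoked, and it is the part a careful proof must actually spend effort on. A shortcut worth mentioning but not relying on is the frequency-domain route --- a stable discrete system is passive precisely when $\G(z)=\D+\C(z\I-\A)^{-1}\B$ is positive real on the unit circle, and a discrete spectral-factorisation / positive-real argument then manufactures $\bP$ --- but this imports heavier machinery (it is essentially the discrete positive-real lemma cited in the statement), so I would present the time-domain storage-function argument as the self-contained proof.
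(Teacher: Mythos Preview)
The paper does not actually prove this lemma: it is stated as a known result with a citation to \cite{Hitz} and no proof is supplied. So there is nothing in the paper to compare your argument against; you have written a self-contained proof where the authors simply invoked the literature.

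On the merits of your proposal itself: the sufficiency direction is correct and cleanly done --- the algebraic identity you display is exactly right, and the telescoping argument delivers Definition~\ref{def:qsr_def} with $\beta=-\x_0^{T}\bP\x_0$. For necessity, your overall strategy (Willems' available storage) is the standard one, and you correctly flag that the hard part is showing $V_a$ is quadratic. But the specific mechanism you sketch for that step is not quite sound: degree-two homogeneity $V_a(\lambda\x)=\lambda^{2}V_a(\x)$ does follow from linearity, but ``superposing trajectories via reachability'' does not by itself promote a homogeneous function to a quadratic form --- $V_a$ is defined by a supremum, so additivity in $\x_0$ is not obvious and in fact generally fails for nonlinear value functions. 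The honest argument is that $V_a$ is the value of a (sign-indefinite) linear-quadratic optimal control problem, and one shows either directly via a Riccati-type characterisation or via the fact that the optimal input is linear state feedback that the value function is quadratic; alternatively one bounds $V_a$ above and below by quadratic storage functions and uses convexity of the set of storages. Your closing paragraph already points toward the frequency-domain spectral-factorisation route as the alternative, which is in fact the route taken in the original reference the paper cites. So your plan is structurally right, but if you were to write this out in full you would need to replace the ``superposition'' sentence with one of these genuine arguments.
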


\begin{lemma}\label{lem:KYPdiss}
    \textit{(Discrete Dissipativity Lemma \cite{Kottenstette2010})} A square stable LTI system $\Gcal:\ell_{2e}\rightarrow\ell_{2e}$ with minimal state space realization $(\A,\B,\C,\D)$ is QSR-dissipative if and only if there exists $\bP \succ 0$ such that 
    \begin{equation}\label{eqn:KYPdiss}
    \begin{bmatrix}
         \A^T\bP\A - \bP - \C^T\Q\C & \A^T\bP\B - (\C^T\bS + \C^T\Q\D) \\ (\A^T\bP\B - (\C^T\bS + \C^T\Q\D))^T & -\D^T\Q\D - \D^T\bS -\bS^T\D -\R  + \B^T\bP\B
    \end{bmatrix}  \preceq 0,
\end{equation}
\end{lemma}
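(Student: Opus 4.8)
My plan is to prove the two implications separately, using the candidate quadratic storage function $V(\x_k)=\x_k^\trans\bP\x_k$ as the bridge between the LMI~\eqref{eqn:KYPdiss} and the summation form of QSR-dissipativity in Definition~\ref{def:qsr_def}.

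\emph{Sufficiency.} Suppose $\bP\succ0$ solves~\eqref{eqn:KYPdiss}. Along an arbitrary trajectory of $(\A,\B,\C,\D)$ I would expand $V(\x_{k+1})-V(\x_k)=(\A\x_k+\B\bu_k)^\trans\bP(\A\x_k+\B\bu_k)-\x_k^\trans\bP\x_k$, a quadratic form in $\bmat{\x_k^\trans & \bu_k^\trans}^\trans$, and substitute $\y_k=\C\x_k+\D\bu_k$ into the supply rate $w(\bu_k,\y_k)$ to write it as a quadratic form in the same vector. A direct block-by-block comparison then shows that the symmetric matrix of $w(\bu_k,\y_k)-\bigl(V(\x_{k+1})-V(\x_k)\bigr)$ is exactly the negative of the block matrix in~\eqref{eqn:KYPdiss} (reading the $\bS$ cross-term in the usual symmetric quadratic-form convention). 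Hence the LMI being $\preceq0$ is equivalent to the pointwise dissipation inequality $V(\x_{k+1})-V(\x_k)\le w(\bu_k,\y_k)$ for all $\x_k,\bu_k$. Summing over $k=0,\dots,T$ telescopes to $\sum_{k=0}^{T}w(\bu_k,\y_k)\ge V(\x_{T+1})-V(\x_0)\ge-\x_0^\trans\bP\x_0$, since $\bP\succ0$ gives $V\ge0$; taking $\beta:=-\x_0^\trans\bP\x_0$ yields the inequality of Definition~\ref{def:qsr_def}, so $\Gcal$ is QSR-dissipative.

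\emph{Necessity.} This is the direction I expect to do the real work. Assuming $\Gcal$ is QSR-dissipative, I would construct a quadratic storage function from the available storage $V_a(\x):=\sup_{T\ge0,\;\bu}\bigl\{-\textstyle\sum_{k=0}^{T-1}w(\bu_k,\y_k)\;:\;\x_0=\x\bigr\}$, which is $\ge0$ by taking $T=0$ and is finite because dissipativity bounds the summed supply below; a Bellman/dynamic-programming argument then shows $V_a$ satisfies the dissipation inequality $V_a(\x_{k+1})-V_a(\x_k)\le w(\bu_k,\y_k)$. The structural crux is that for an LTI plant with a quadratic supply rate this value function is again a quadratic form, $V_a(\x)=\x^\trans\bP\x$ with $\bP=\bP^\trans\succeq0$: the finite-horizon values are quadratic by a Riccati recursion, and that recursion converges (so the supremum is a quadratic form) precisely because $\A$ is stable, which also makes the infinite-horizon problem well posed. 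Re-expressing the dissipation inequality for $V_a$ as a matrix inequality in $\bmat{\x_k^\trans & \bu_k^\trans}^\trans$ recovers exactly the LMI~\eqref{eqn:KYPdiss}. Finally I would upgrade $\bP\succeq0$ to $\bP\succ0$ using minimality: if $\x^\trans\bP\x=V_a(\x)=0$ for some $\x\ne\zero$, observability of $(\A,\C)$ makes the free response from $\x$ a nonzero output from which strictly positive energy can be extracted, contradicting $V_a(\x)=0$ (this is where the sign hypotheses on $\Q$ recorded in \autoref{tbl:cases} enter). The main obstacles are thus (i) justifying rigorously that $V_a$ is quadratic and (ii) the strict-definiteness step, both of which lean on the stability and minimality hypotheses.

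A shorter route I would keep in reserve is to reduce to the passivity statement already available as Lemma~\ref{lem:KYPpass}: a loop/multiplier transformation factoring $\bmat{\Q & \bS\\ \bS^\trans & \R}$ turns QSR-dissipativity of $(\A,\B,\C,\D)$ into passivity of a transformed realization, after which Lemma~\ref{lem:KYPpass} applies and its LMI is mapped back to~\eqref{eqn:KYPdiss}; the catch is that this is only clean when $\Q$ and $\R$ have suitable sign-definiteness (matching the tabulated cases in \autoref{tbl:cases}), and one must check that minimality survives the transformation. A purely frequency-domain proof is also possible --- Parseval equates QSR-dissipativity of a stable LTI system with a Popov-type inequality on $G(e^{j\omega})$, and the general discrete KYP lemma converts that into~\eqref{eqn:KYPdiss} --- but it relies on machinery further removed from the elementary storage-function argument above.
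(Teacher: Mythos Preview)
The paper does not prove this lemma; it is stated with a citation and used as a known result, so there is no proof in the paper to compare against. Your storage-function argument is the standard route in the dissipativity literature (Willems, Hill--Moylan, and their discrete-time analogues) and is essentially correct. Sufficiency is routine. For necessity, the two obstacles you flag --- that the available storage $V_a$ is a quadratic form for an LTI plant with quadratic supply, and the upgrade from $\bP\succeq0$ to $\bP\succ0$ via observability --- are precisely the delicate steps; the second does not go through for completely unrestricted $(\Q,\bS,\R)$ without an additional sign hypothesis, as you correctly acknowledge. Since the paper only invokes the lemma in the cases of \autoref{tbl:cases}, this caveat is immaterial for its purposes.
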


\subsection{Connecting Dissipativity and Learning-Based Control}

When a plant is poorly understood due to parametric uncertainty, unmodeled nonlinearities, delays, etc, closed-loop conditions will not yield reliable stability guarantees. In this case, open-loop conditions based on coarse knowledge of plant \ac{io} properties (often derived from first principles) can be used to achieve robust stability guarantees without reliance on accurate state-space models. For example, it is well known that many systems -- such as RLC circuits, PID controllers, and flexible robotic manipulators -- are passive for all possible parameters \cite{Brogliato2007}. This fact follows from physical laws even for nonlinear and time-varying cases. By the Passivity Theorem \cite{Vidyasagar1977}, any passive controller must stabilize such systems. Thus, if an open-loop passivity constraint is imposed on the controller, closed-loop stability is guaranteed without resorting to a deficient LTI model. Moreover, while training and test data distributions may vary \cite{Osa2018}, passivity ensures closed-loop stability \cite{Brogliato2007}. This stability is guaranteed despite sparse training data or additional noise or inconsistencies in training data, which is especially relevant if a human expert is mimicked.

At its core, behavior cloning can be thought of as a system identification problem. The dynamic equations of the expert controller must be learned from its \ac{io} data so that the learned controller can sufficiently mimic its behavior. However, system identification problems require sufficient information from the \ac{io} data collected to accurately model a system. For example, much of system's identification literature relies on Willem's Fundamental Lemma, which states that a single trajectory of \ac{io} data from a linear system can contain all possible trajectories if the input to the system is sufficiently excited \cite{willems2005note}. These conditions have been extended for certain cases of nonlinear systems as well \cite{Berberich2020b}. However, determining and implementing an input with enough excitation is a potentially difficult process -- particularly when modelling more complex experts. The burden of collecting large amounts of low-noise data that captures all relevant behaviors is high. By enforcing dissipativity as a design constraint for the learned controller, closed loop stability can be guaranteed -- even in cases where there is a lack of salient data to successfully ``identify" the expert. The next section formalizes the dissipativity-constrained behavior cloning problem.

\section{Problem Statement}

Consider a plant $\Gcal:\Hcal_e^m \rightarrow \Hcal_e^p$ where $\bu,\,\y\in\ell_{2e}$ are inputs and outputs , $\mathbb{R}^m$ and $\mathbb{R}^p$, respectively. Consider also the LTI control law for dynamic output feedback $\mathcal{C}:(\Ahat,\Bhat,\Chat,\Dhat)$ with states $\xhat\in\mathbb{R}^{\nhat}$, inputs $\uhat\in\mathbb{R}^p$, and outputs $\yhat\in\mathbb{R}^m$. Let the controller and plant be in negative feedback defined by $\bu=\rhat-\yhat$ and $\uhat = \br +\y$, where $\rhat\in\mathbb{R}^m$ and $\br\in\mathbb{R}^p$ are noise.

 Suppose that an expert policy demonstration is defined by a sequence of plant-output/control-action pairs, $\{\y_k,\bu_k\}_{k=0}^N$, as well as the initial state of the plant, $\x_0$. The objective is to design a dynamic output feedback control law so that the controller, $\mathcal{C}$, closely imitates the behavior of the expert by producing the correct output, $\bu$, given some input, $\y$. The controller must also satisfy a prescribed QSR-dissipativity condition. This condition, in turn, ensures closed-loop stability through an associated \ac{io} stability theorem, like \autoref{thm:QSR}, even if the expert is imitated poorly or chosen poorly and would itself fail to ensure stability. 

In summary, dissipativity-constrained behavior cloning minimizes $\frac{1}{N}\sum_{k=0}^N \loss(\yhat,\bu_k)$ over $\mathcal{C}:(\Ahat,\Bhat,\Chat,\Dhat)$ such that $\mathcal{C}$ is $(\Q,\bS,\R)$-dissipative, where $\loss$ is a loss function that empirically measures how well the learned controller mimics the expert policy on the plant \ac{io} data. A simple and effective choice of loss is the \ac{mse}. Lemma \ref{lem:KYPdiss} can be applied to convert dissipativity condition into a matrix inequality constraint that is enforced on the controller as it is learned. The complete learning process can be described as
\begin{subequations} \label{eqn:nonconvex_problem}
    \begin{align}
        &\min_{\bP \succ0,\Ahat, \Bhat, \Chat, \Dhat} \optSpace \frac{1}{N}\sum_{k=0}^N ||\bu_k-\yhat_k||_2^2  \\
        &\text{s.t.} \optSpace \begin{bmatrix}
         \Ahat^T\bP\Ahat - \bP - \Chat^T\Q\Chat & \Ahat^T\bP\Bhat - (\Chat^T\bS + \Chat^T\Q\Dhat) \\ (\Ahat^T\bP\Bhat - (\Chat^T\bS + \Chat^T\Q\Dhat))^T & -\Dhat^T\Q\Dhat - \Dhat^T\bS -\bS^T\Dhat -\R  + \Bhat^T\bP\Bhat
    \end{bmatrix}  \preceq 0. \label{eqn:nonconvex_constraint}
    \end{align}
\end{subequations}
Constraint~\ref{eqn:nonconvex_constraint} is, in general, nonlinear. However, this constraint can be re-posed as an \ac{lmi}, as seen in the next section. While previous work \cite{Strong2022} was able to learn $\Chat$ of $\mathcal{C}$ using a convex objective with convex constraints, learning all elements of $\mathcal{C}$ requires a non-convex objective function. \ac{ico} and  \ac{pgd} are explored to address the problem of learning $\mathcal{C}$ despite the non-convex optimization problem.

\section{Main Results}

A new LMI constraint is proposed for imposing QSR-dissipativity with $\mathcal{C}: (\Ahat, \Bhat, \Chat, \Dhat)$ and $\bP$ as the design variables. QSR-dissipativity includes passivity, as well as more specific design criteria such as bounded gain and interior conic bounds. These special cases are tabulated in terms of their equivalent QSR-dissipativity characterization in \autoref{tbl:cases}. Together, the results of Lemma~\ref{lem:KYPdiss} and Corollary~\ref{lem:QSRc} are combined to provide a unified framework for imposing any interior conic bounds, and more generally any QSR property with $\Q \prec0$, when designing the dynamic output feedback controller, $\mathcal{C}$.
\begin{corollary}\label{lem:QSRc}
Let the LTI system $\Gcal:(\A,\B,\C,\D)$ be controllable and observable. Then $\Gcal$ is $(\Q,\bS,\R)$-dissipative with $\Q  \prec 0$ if and only if there exists $\bP  \succ 0$ such that 
    \begin{equation}\label[cons]{eqn:almost_convex_constraint}
        \begin{bmatrix}
            -\bP & -\C^T\bS & \C^T & \A^T \\
            * & -\D^T\bS - \bS^T\D - \R & \D^T & \B^T \\
            * & * & \Q^{-1} & \mathbf{0} \\
            * & * & * & -\bP^{-1}
        \end{bmatrix}  \preceq 0.
    \end{equation}
\end{corollary}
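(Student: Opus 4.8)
The plan is to reduce the LMI in the statement to the discrete dissipativity characterization of \autoref{lem:KYPdiss} by two successive Schur complements. Since $\Gcal$ is controllable and observable, $(\A,\B,\C,\D)$ is minimal, so \autoref{lem:KYPdiss} applies: $\Gcal$ is $(\Q,\bS,\R)$-dissipative if and only if there is $\bP\succ0$ satisfying the matrix inequality~\eqref{eqn:KYPdiss}. The stability that \autoref{lem:KYPdiss} assumes is not an independent hypothesis here: a $(\Q,\bS,\R)$-dissipative system with $\Q\prec0$ is $\ell_2$-stable by the QSR input–output stability lemma above, and with minimality this yields internal stability; conversely, the leading block of~\eqref{eqn:KYPdiss} forces $\A^\top\bP\A-\bP\preceq\C^\top\Q\C\preceq0$, which together with observability and $\Q\prec0$ makes $\A$ Schur-stable, so I may invoke \autoref{lem:KYPdiss} freely in both directions.

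First I would regroup the left-hand side of~\eqref{eqn:KYPdiss}. Expanding the outer products $\bmat{\A^\top\\\B^\top}\bP\bmat{\A & \B}$ and $-\bmat{\C^\top\\\D^\top}\Q\bmat{\C & \D}$ and matching entries, \eqref{eqn:KYPdiss} becomes
\[
M_0 + \bmat{\A^\top\\\B^\top}\bP\bmat{\A & \B} + \bmat{\C^\top\\\D^\top}(-\Q)\bmat{\C & \D} \preceq 0,
\]
where $M_0 := \bmat{-\bP & -\C^\top\bS\\ -\bS^\top\C & -\D^\top\bS-\bS^\top\D-\R}$ collects the terms affine (rather than quadratic) in the data. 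Since $\bP\succ0$ and, by hypothesis, $-\Q\succ0$, both outer-product terms are positive semidefinite additions to $M_0$.

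Next I would peel these two terms off one at a time with the Schur complement lemma, whose clean form requires the pivot block to be negative definite. Because $\Q$ is invertible with $\Q^{-1}\prec0$ (here is the only place the hypothesis $\Q\prec0$ is genuinely used beyond what \autoref{lem:KYPdiss} needs), eliminating the $\Q$-term is equivalent to adjoining a block row/column $\bmat{\C & \D}$ together with the diagonal block $\Q^{-1}$; then, because $-\bP^{-1}\prec0$, eliminating the remaining $\bP$-term is equivalent to adjoining a block row/column $\bmat{\A & \B & \zero}$ together with the diagonal block $-\bP^{-1}$. Carrying out both steps and filling in the symmetric entries marked $(*)$ — using $(-\C^\top\bS)^\top=-\bS^\top\C$, $(\C^\top)^\top=\C$, $(\A^\top)^\top=\A$, and so on — yields precisely~\eqref{eqn:almost_convex_constraint}. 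Each Schur-complement step is an equivalence, and the existential quantifier on $\bP$ passes through unchanged, so the chain proves both implications simultaneously.

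I do not expect a real obstacle beyond careful bookkeeping: keeping the block dimensions aligned so the zero block lands in the $(3,4)$/$(4,3)$ slot, and tracking the signs so that the two pivots are negative (not positive) definite. The latter is exactly why the corollary is restricted to $\Q\prec0$ — if $\Q\succeq0$ the pivot $\Q^{-1}$ is unavailable or the wrong sign — which is consistent with \autoref{tbl:cases}, where the $\Q\succeq0$ case is imposed via~\eqref{eqn:KYPdiss} rather than~\eqref{eqn:almost_convex_constraint}.
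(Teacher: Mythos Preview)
Your proposal is correct and follows essentially the same route as the paper: regroup \eqref{eqn:KYPdiss} into the affine block $M_0$ plus the two outer products $-\X^\top\Q\X$ and $\Y^\top\bP\Y$ (with $\X=[\C\;\D]$, $\Y=[\A\;\B]$), then apply two Schur complements about those terms, which requires $\Q\prec0$ and $\bP\succ0$. Your discussion of why the stability hypothesis of \autoref{lem:KYPdiss} is not an extra assumption is more thorough than the paper's terse proof, which simply invokes the Schur complement without addressing that point.
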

\begin{proof}
\autoref{eqn:nonconvex_constraint} is represented as 
\begin{equation}
     \begin{bmatrix}
        - \bP & - (\Chat^T\bS) \\ *  &  - \Dhat^T\bS -\bS^T\Dhat -\R 
    \end{bmatrix} - \X^T \Q \X + \Y^T \bP \Y  \preceq 0, 
\end{equation}
where $\X = \bmat{\Chat & \Dhat}$ and $\Y = \bmat{\Ahat & \Bhat}.$ A Schur complement \cite{boyd1994linear} is performed about terms $\X^T\Q\X$ and $\Y^T\bP\Y.$ Note that the matrix, $\Q$, must be negative definite.
\end{proof}

\begin{remark}\label{ref: rem1} While \autoref{eqn:almost_convex_constraint} results in a nonlinear matrix inequality, the constraint is convexified through an overbound on $\bP^{-1}.$ The inverse of $\bP$ is bounded above by $2\tilde{\bP}^{-1} - \tilde{\bP}^{-1}\bP\tilde{\bP}^{-1}$ \cite{CaverlyArXiv}, where $\tilde{\bP}$ is some positive definite matrix, such as a previous iteration of $\bP.$ Then, \autoref{eqn:almost_convex_constraint} negative semi-definite if and only if there exists $\bP  \succ 0$ such that 
    \begin{equation}\label[cons]{eqn:convex_constraint}
        \begin{bmatrix}
            -\bP & -\C^T\bS & \C^T & \A^T \\
            * & -\D^T\bS - \bS^T\D - \R & \D^T & \B^T \\
            * & * & \Q^{-1} & \mathbf{0} \\
            * & * & * & -2\tilde{\bP}^{-1} +\tilde{\bP}^{1}\bP\tilde{\bP}^{-1}
        \end{bmatrix}  \preceq 0.
    \end{equation}
By substituting Equation \ref{eqn:convex_constraint} for Equation \ref{eqn:nonconvex_constraint}, the learning problem in Equation \ref{eqn:nonconvex_problem} now has convex constraints. 
\end{remark}

However, the objective function of \autoref{eqn:nonconvex_problem} remains non-convex in terms of $\Ahat, \Bhat, \Chat,$ and $\Dhat$. The following sections show two approaches, \ac{ico} and \ac{pgd}, to formulating and solving Problem~\ref{eqn:nonconvex_problem}, while imposing \cref{eqn:convex_constraint}. These methods are then evaluated in terms of their efficacy when mimicking experts for both linear and non-linear plants.

\subsection{Iterative Convex Overbounding}

The initial controller state ($\xhat_0$) and known controller input ($\uhat$) can be propagated forward using controller dynamics, $\mathcal{C}: (\Ahat, \Bhat, \Chat, \Dhat)$, to predict controller output ($\yhat$). This approach is similar to system identification literature, where large matrices known as the observability and Toeplitz matrix, are constructed and used to learned the linear dynamics of a plant based on its inputs and outputs \cite{van1994n4sid}. However, our method uses only two time steps and assumes a known initial state. Using \ac{mse} as the loss function, the imitation learning problem becomes
\begin{subequations}\label[prob]{eq:min_forward_prop_prob}
\begin{align}\label{eq:min_forward_prop}
    \min_{\Ahat,\Bhat,\Chat,\Dhat} \frac{1}{N}\sum_{k=0}^N \loss(\bu_k, \yhat_k) = \min_{\Ahat,\Bhat,\Chat,\Dhat}\frac{1}{N}\sum_{k=0}^N \left |\begin{bmatrix}
         \bu_{k,0} \\\bu_{k,1}
    \end{bmatrix} - \begin{bmatrix}
         \Chat\xhat_{k,0} + \Dhat\uhat_{k,0} \\ \Chat(\Ahat\xhat_{k,0} + \Bhat\uhat_{k,0}) + \Dhat\uhat_{k,1}
    \end{bmatrix}\right |_2^2
    \\
    \text{s.t.      } \text{\cref{eqn:convex_constraint}},
\end{align}
    
\end{subequations}
where $\bu$ is the true controller output and $N$ is the number of training data trajectories. 

\autoref{eq:min_forward_prop} involves only two time steps of forward propagation, but the objective quickly becomes non-convex, with polynomial terms resulting from the $\ell_2$ norm. Rather than optimizing a non-convex objective function, \ac{ico} can be used to convexify the objective function through overbounding techniques \cite{Warner2017a, sebe2018sequential, de2000convexifying}. \ac{ico} is an overbounding technique that is typically implemented when facing bilinear matrix inequality constraints that result from some $\mathcal{H}_2$ and $\mathcal{H}_\infty$ design problems \cite{Warner2017a}, but it can easily be extended to nonconvex objective functions. While \ac{ico} may not converge to the global optimum, convergence to a local optimum is guaranteed. Moreover, \ac{ico} scales well to high dimensions, and at the point of overbounding, the objective function has no conservatism \cite{Warner2017a}. In this paper, a series of \acp{lmi} are developed to overbound the polynomial terms in \autoref{eq:min_forward_prop}-- creating a new objective function that is then minimized in place of the true \ac{mse}. 

In \ac{ico}, a design variable, $\V$, is broken down into a constant, $\V_0$, and a design variable, $\delta \V.$ The constant $\V_0$ is a feasible solution to the overbound constraints, while $\delta \V$ is the design variable used to optimize the objective function. The following lemmas describe overbounding techniques used for non-convex terms found in the objective function. Lemmas \ref{lem:ico_2} and \ref{lem:ico_3} are specific instances of the general polynomial bound developed by Reference \cite{Warner2017a}. The following lemmas can be used to change an expression that is polynomial in a matrix to an equivalent expression that is linear, facilitating numerical solutions.

\begin{lemma}\label{lem:ico_1}
For $\M=\M^\trans\in \mathbb{R}^{n \times n}$, $\V\in\mathbb{R}^{n\times p}$, $\Z\in\mathbb{R}^{p\times n}$, and $m,n,p\in\mathbb{Z}$, $\Z^T\V^T\V \Z \prec \M$ is equivalent to
\begin{equation}\label{eq:lmi_ico1}
    \bmat{- \M + \Gamma(\V) & \Z^T\delta \V^T \\
    * & -\mathbf{I}}  \prec 0,
\end{equation}
where $\V = \V_0 + \delta \V$, and $\Gamma(\V) = \Z^T\V_0^T\delta\V \Z + \Z^T\delta\V^T\V_0 \Z + \Z^T\V_0^T\V_0\Z.$ 
\end{lemma}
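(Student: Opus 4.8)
The plan is to observe that the stated inequality $\Z^\trans\V^\trans\V\Z \prec \M$ is \emph{quadratic} (not bilinear) in the original variable, and that after the ICO substitution $\V = \V_0 + \delta\V$ the only term that fails to be affine in the new decision variable $\delta\V$ is a single Gram-type term $\Z^\trans\delta\V^\trans\delta\V\Z \succeq 0$. Since that term factors as $(\Z^\trans\delta\V^\trans)(\delta\V\Z)$, it can be removed by one Schur complement against a constant negative-definite block, producing an equivalent matrix inequality that is affine in $\delta\V$. So the whole proof is: substitute, expand, recognise the factorisation, Schur-complement.

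Concretely, first I would write $\V^\trans\V = \V_0^\trans\V_0 + \V_0^\trans\delta\V + \delta\V^\trans\V_0 + \delta\V^\trans\delta\V$ and conjugate by $\Z$, which gives exactly $\Z^\trans\V^\trans\V\Z = \Gamma(\V) + \Z^\trans\delta\V^\trans\delta\V\Z$ with $\Gamma(\V)$ as defined in the statement. Hence $\Z^\trans\V^\trans\V\Z \prec \M$ holds if and only if
\[
-\M + \Gamma(\V) + \bigl(\Z^\trans\delta\V^\trans\bigr)\bigl(\delta\V\Z\bigr) \prec 0 .
\]
Now apply the Schur complement lemma \cite{boyd1994linear} about the $(2,2)$-block $-\I \prec 0$: the displayed inequality is equivalent to
\[
\bmat{-\M + \Gamma(\V) & \Z^\trans\delta\V^\trans \\ * & -\I} \prec 0,
\]
where $(*)$ stands for the repeated block $(\Z^\trans\delta\V^\trans)^\trans = \delta\V\Z$. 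This is precisely \eqref{eq:lmi_ico1}, so the equivalence is established in both directions simultaneously.

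Honestly there is no real obstacle here — this is the most elementary of the three ICO lemmas, and the Schur complement is what does all the work. The only points that warrant a sentence of care are: (i) $\Gamma(\V)$ is symmetric, so the top-left block is a legitimate symmetric block of the partitioned matrix; (ii) because the pivot block $-\I$ is negative definite, the Schur-complement step is an exact ``if and only if,'' so no conservatism is introduced and the reduction is lossless, consistent with the remark in the text that ICO has no conservatism at the point of overbounding; and (iii) the argument never uses $\V$ for anything beyond the fixed split $\V=\V_0+\delta\V$, so it is a genuine pointwise equivalence rather than a relaxation. The heavier polynomial-overbound machinery of Reference~\cite{Warner2017a} is only needed for the higher-degree terms handled in Lemmas~\ref{lem:ico_2} and \ref{lem:ico_3}.
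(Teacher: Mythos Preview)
Your proposal is correct and follows essentially the same approach as the paper: substitute $\V=\V_0+\delta\V$, expand so that the only non-affine term is $\Z^\trans\delta\V^\trans\delta\V\Z$, and remove it via a Schur complement against the $-\I$ block. Your additional remarks on symmetry of $\Gamma(\V)$ and the ``if and only if'' nature of the Schur step are accurate and merely make explicit what the paper's terse proof leaves implicit.
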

\begin{proof}
     Bound $\Z^T\V^T\V \Z$ from above with $\M$ so that $\Z^T\V^T\V \Z - \M \prec 0.$ Expand $\Z^T\V^T\V \Z$ using $\V = \delta\V + \V_0$ to get $\Gamma(\V) +\Z^T\delta\V^T \mathbf{I}\delta\V \Z - \M \prec0$. Perform a Schur complement about $\Z^T\delta\V^T \mathbf{I}\delta\V \Z.$
\end{proof}

Note that Lemma \ref{lem:ico_1} is not needed to overbound the quadratic elements of the objective function. However, higher order nonlinearities within the objective function require this lemma to create \acp{lmi}.
    
\begin{lemma}\label{lem:ico_2}
For $\M=\M^\trans\in \mathbb{R}^{n \times n}$, $\bS\in \mathbb{R}^{m \times n}$, $\U\in \mathbb{R}^{m \times n}$, $\V\in\mathbb{R}^{n\times p}$, $\W=\W^\trans\in \mathbb{R}^{n \times n}$, $\Z\in\mathbb{R}^{p\times n}$, and $m,n,p\in\mathbb{Z}$, $\text{He}(\bS^T\U\V \Z) \prec \M$ if and only if
\begin{equation}\label{eq:lmi_ico2}
    \begin{bmatrix}
         - \M + \Gamma(\U, \V) & \bS^T\delta \U & \Z^T\delta \V^T \\ * & -\W^{-1} & 0 \\
         * & *  & -\W
    \end{bmatrix}  \prec 0,
\end{equation}
where $\Gamma(\U, \V) = \text{He}(\bS^T\delta\U\V_0 \Z + \bS^T\U_0\delta\V \Z + \bS^T\U_0\V_0\Z)$, $\U = \U_0 + \delta \U,$ and $\V = \V_0 + \delta \V$.
\end{lemma}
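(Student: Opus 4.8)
The plan is to reuse the template behind \autoref{lem:ico_1}: substitute the splittings $\U=\U_0+\delta\U$ and $\V=\V_0+\delta\V$, separate the part of $\He(\bS^\trans\U\V\Z)$ that is affine in the design variables $\delta\U,\delta\V$, overbound the remaining genuinely bilinear term with the slack variable $\W$, and then clear the matrix inverses with a Schur complement to land on \autoref{eq:lmi_ico2}.

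Concretely, I would first expand
\[
\bS^\trans\U\V\Z=\bS^\trans\U_0\V_0\Z+\bS^\trans\delta\U\,\V_0\Z+\bS^\trans\U_0\,\delta\V\,\Z+\bS^\trans\delta\U\,\delta\V\,\Z,
\]
so that $\He(\bS^\trans\U\V\Z)=\Gamma(\U,\V)+\He(\bS^\trans\delta\U\,\delta\V\,\Z)$, with $\Gamma(\U,\V)$ exactly the Hermitian part of the first three summands as in the statement. Hence $\He(\bS^\trans\U\V\Z)\prec\M$ is equivalent to $\He(\bS^\trans\delta\U\,\delta\V\,\Z)\prec\M-\Gamma(\U,\V)$. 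Writing $A:=\bS^\trans\delta\U$ and $B:=\delta\V\,\Z$ (both $n\times n$, so the slack $\W\in\mathbb{R}^{n\times n}$ is conformable and the block matrix is $3n\times3n$), the task reduces to certifying $AB+B^\trans A^\trans\prec\M-\Gamma(\U,\V)$.

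For this I would bring in the slack through a matrix Young inequality: for any $\W=\W^\trans\succ0$, expanding $(A\W^{1/2}-B^\trans\W^{-1/2})(A\W^{1/2}-B^\trans\W^{-1/2})^\trans\succeq0$ yields $\He(AB)\preceq A\W A^\trans+B^\trans\W^{-1}B$. Thus $A\W A^\trans+B^\trans\W^{-1}B\prec\M-\Gamma(\U,\V)$ certifies $\He(\bS^\trans\U\V\Z)\prec\M$; and since $\diag(-\W^{-1},-\W)\prec0$ for $\W\succ0$, a Schur complement about that lower-right block turns the last inequality into precisely \autoref{eq:lmi_ico2}, with $\bS^\trans\delta\U$ and $\Z^\trans\delta\V^\trans$ filling the $(1,2)$ and $(1,3)$ slots. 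Reversing the Schur complement shows that, for each fixed $\W\succ0$, \autoref{eq:lmi_ico2} is equivalent to $A\W A^\trans+B^\trans\W^{-1}B\prec\M-\Gamma(\U,\V)$, so feasibility of \autoref{eq:lmi_ico2} over the slack $\W$ is the working content of the lemma for \ac{ico}.

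The step I expect to be the main obstacle is this $\W$-parametrized overbound of the indefinite mixed term $\He(\bS^\trans\delta\U\,\delta\V\,\Z)$: unlike the pure quadratic handled in \autoref{lem:ico_1}, the remainder here couples two distinct design variables, so the Young bound is the sole source of conservatism and is exact only when $A\W=B^\trans$ can be realized --- in particular at the overbounding point $\delta\U=\delta\V=\zero$, where \autoref{eq:lmi_ico2} collapses to $\He(\bS^\trans\U_0\V_0\Z)\prec\M$ and there is no conservatism. I would therefore present \autoref{lem:ico_2} as the special instance of the general polynomial overbound of \cite{Warner2017a} obtained with this choice of $A$, $B$, and slack $\W$, emphasizing that the direction needed downstream --- feasibility of \autoref{eq:lmi_ico2} implies $\He(\bS^\trans\U\V\Z)\prec\M$ --- follows immediately from the Young inequality together with the Schur complement above.
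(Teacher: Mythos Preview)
Your proposal is correct and follows essentially the same route as the paper: expand $\bS^\trans\U\V\Z$ via $\U=\U_0+\delta\U$, $\V=\V_0+\delta\V$, isolate the bilinear remainder $\He(\bS^\trans\delta\U\,\delta\V\,\Z)$, overbound it with Young's relation using the slack $\W$, and clear the resulting quadratic terms by a Schur complement to obtain \autoref{eq:lmi_ico2}. Your remark that the Young step is only one-sided (so the ``if and only if'' in the statement is really a sufficiency overbound, exact at $\delta\U=\delta\V=\zero$) is a valid and useful observation that the paper's proof does not make explicit.
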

\begin{proof}
The Hermition $\text{He}(\Z^T\V^T\U \bS)$ is expanded using $\V = \V_0 + \delta\V$ and $\bU = \bU_0 + \delta \bU$ to get $\Gamma(\U,\V) + \text{He}(\bS^T\delta\U\delta\V\Z).$ Through Young's relation \cite{zhou1988robust}, $\text{He}(\bS^T\delta\U\delta\V\Z) \leq \bS^T\delta\U^T\W\delta\U \bS + \Z^T \delta\V^T \W^{-1} \delta\V \Z,$ where $\W$ is a symmetric, positive definite matrix. Bound the expression $\bS^T\delta\U^T\W\delta\U \bS + \Z^T \delta\V^T \W^{-1} \delta\V \Z + \Gamma(\U, \V)$ from above using matrix design variable, $\M$, and then perform a Schur complement about $\bS^T\delta\U^T\W\delta\U \bS$ and $\Z^T \delta\V^T \W^{-1} \delta\V \Z.$
\end{proof}

\begin{lemma}\label{lem:ico_3}
For $\M=\M^\trans\in \mathbb{R}^{n \times n}$, $\V\in\mathbb{R}^{n\times p}$, $\Z\in\mathbb{R}^{p\times n}$, $\U \in \mathbb{R}^{m \times n}$, $\T \in \mathbb{R}^{m \times m},$ and $m,n,p\in\mathbb{Z}$, the cubic Hermition expression $\text{He}(\Z^T\V^T\U^T\T\bS) \prec \M$ if and only if
\begin{equation}\label{eq:lmi_ico3}
     \bmat{\bmat{- \M +\Gamma(\V, \U, \T) & \Z^T(\V_0^T\delta\U^T + \delta\V^T\U_0^T) & \bS^T\delta\T^T \\ * & -\W_1^{-1} & \mathbf{0} \\ * & * & -\W_1 } & \bmat{\Z^T \\ \mathbf{0} \\ \mathbf{0}}\delta\V^T & \bmat{\bS^T\T_0^T \\ \I &\\\mathbf{0}}\delta\U \\ * & \W_2^{-1} & \mathbf{0} \\
    * & * & -\W_2}  \prec 0,
\end{equation}
where $\Gamma(\V,\U,\T) = \He(\Z^T(\delta\V^T\U_0^T\T_0 + \V_0^T\delta\U^T\T_0 +\V_0^T\U_0^T\delta\T +\V_0^T\U_0^T\T_0)\bS),$ $\T = \T_0 + \delta\T, \V = \V_0 + \delta\V_0, \U = \U_0 + \delta \U,$ and $\W_1$ and $\W_2$ are some symmetric, positive definite matrices.
\end{lemma}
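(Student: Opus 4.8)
The plan is to follow the same recipe used in the proofs of Lemmas \ref{lem:ico_1} and \ref{lem:ico_2}: expand the nonconvex Hermitian expression about the nominal values, isolate the bilinear/trilinear residual terms, dominate them with Young's relation using the auxiliary weights $\W_1,\W_2$, and then take Schur complements to linearize. The extra difficulty here is that the expression $\He(\Z^T\V^T\U^T\T\bS)$ is \emph{cubic} in the design variables, so a single application of Young's relation will not suffice — two nested applications are needed, which is exactly why two weights $\W_1$ and $\W_2$ and a nested block structure appear in \eqref{eq:lmi_ico3}.

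First I would substitute $\T=\T_0+\delta\T$, $\V=\V_0+\delta\V$, $\U=\U_0+\delta\U$ into $\He(\Z^T\V^T\U^T\T\bS)$ and collect terms by degree in the $\delta$ variables. The degree-$\le 1$ part is precisely $\Gamma(\V,\U,\T)$ as defined in the statement. What remains is a sum of quadratic and cubic cross terms in $\delta\V,\delta\U,\delta\T$. I would group these so that the genuinely cubic piece $\He(\Z^T\delta\V^T\delta\U^T\delta\T\bS)$ and the $\delta\T$-linear quadratic piece $\He(\Z^T(\V_0^T\delta\U^T+\delta\V^T\U_0^T)\delta\T\bS)$ are handled by the first Young step against $\W_1$ (splitting off a $\bS^T\delta\T^T\W_1\delta\T\bS$ term and a term quadratic in $\Z^T(\cdots)$), while the residual quadratic term coupling $\delta\V$ and $\delta\U$ through $\T_0$, namely $\He(\Z^T\delta\V^T\delta\U^T\T_0\bS)$, is handled by the second Young step against $\W_2$. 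After the first split, the factor multiplying the inner quadratic still contains a $\delta\U$, so applying Young's relation again produces the $-\W_2^{-1}$ and $-\W_2$ blocks and the off-diagonal $\begin{bmatrix}\bS^T\T_0^T\\\I\\\mathbf 0\end{bmatrix}\delta\U$ column. Bounding the whole assembled expression above by the design matrix $\M$ and then Schur-complementing out each of the four quadratic-in-$\delta$ terms (one against $\I$ implicitly, the others against $\W_1^{-1},\W_1,\W_2^{-1},\W_2$) yields the displayed block LMI, and the ``if and only if'' is immediate because each Schur complement and each Young substitution is reversible once $\W_1,\W_2\succ0$ are fixed.

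The main obstacle I anticipate is purely bookkeeping: getting the exact placement of $\delta\V,\delta\U,\delta\T$ and the transposes right in the off-diagonal columns, and in particular verifying that the nested Young/Schur procedure reproduces the specific nested $3\times3$-inside-a-larger-block arrangement shown in \eqref{eq:lmi_ico3} rather than some algebraically equivalent but differently partitioned form. I would be especially careful that the cross term $\He(\Z^T\V_0^T\delta\U^T\T_0\bS)$ (quadratic but with no $\delta\V$ or $\delta\T$) is accounted for — it must be absorbed either into $\Gamma$ or into the second Young step, and the stated $\Gamma$ does not contain it, so it belongs with the $\W_2$ grouping via the $\delta\U$ column. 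Once the grouping is fixed, each individual step is a routine Schur complement identical in spirit to the earlier lemmas, so no genuinely new idea is required beyond iterating the technique.

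One sanity check I would perform at the end: setting $\delta\T=0$ (so $\T\equiv\T_0$ is constant) should collapse \eqref{eq:lmi_ico3} to a bound of the form handled by Lemma \ref{lem:ico_2} with $\U\bS$ playing the role of the lumped variable, and setting additionally $\U\equiv\U_0$ should reduce it to Lemma \ref{lem:ico_1}; confirming these limiting cases gives confidence that the block structure and the signs on the $\W_i$ diagonal entries are correct. This also explains why Lemma \ref{lem:ico_1} was flagged as needed for the ``higher order nonlinearities'' — it is the base case of the recursion that Lemma \ref{lem:ico_3} completes at cubic order, which is the order actually required by the two-step forward-propagation loss in \cref{eq:min_forward_prop}.
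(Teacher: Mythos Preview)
Your plan is essentially the paper's own strategy: two nested applications of Young's relation followed by Schur complements. The paper executes it more cleanly, though, by \emph{first} lumping $\F=\U\V$ and applying Lemma~\ref{lem:ico_2} directly to the bilinear expression $\He(\Z^T\F^T\T\bS)$ in the pair $(\F,\T)$; this immediately produces the inner $3\times 3$ block with weights $\W_1^{-1},\W_1$. Only then is $\delta\F=\U_0\delta\V+\delta\U\V_0+\delta\U\delta\V$ expanded inside that block, which isolates a single residual bilinear term
\[
\He\!\left(\bmat{\Z^T\\\mathbf 0\\\mathbf 0}\delta\V^T\delta\U^T\bmat{\T_0\bS & \I & \mathbf 0}\right)
\]
appearing simultaneously in the $(1,1)$ and $(1,2)$ positions. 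One further Young/Schur step with $\W_2$ then gives \eqref{eq:lmi_ico3}. This lump-then-expand ordering is what spares you the term-by-term bookkeeping you anticipate, and it makes the $\I$ in the column $\bmat{\bS^T\T_0^T\\\I\\\mathbf 0}\delta\U$ transparent: it comes from the $\Z^T\delta\F^T$ off-diagonal of the Lemma~\ref{lem:ico_2} output, not from any separate Schur step ``against $\I$.''

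Two small corrections. First, the term $\He(\Z^T\V_0^T\delta\U^T\T_0\bS)$ that you flag as ``quadratic but with no $\delta\V$ or $\delta\T$'' is in fact \emph{linear} in $\delta\U$, and it \emph{is} the second summand of the stated $\Gamma(\V,\U,\T)$; it does not belong with the $\W_2$ grouping. Second, Young's relation is a one-sided inequality, not an identity, so it is not ``reversible once $\W_1,\W_2\succ 0$ are fixed''; the LMI \eqref{eq:lmi_ico3} is a sufficient condition (an overbound), which is all the ICO scheme needs.
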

\begin{proof}
Let $\U\V = \F,$ $\T = \T_0 + \delta\T$, and $\F = \F_0 + \delta \F.$ Using Lemma \ref{lem:ico_2}, $\text{He}(\Z^T\F^T\T \bS)$ is overbounded with the \ac{lmi}
\[ \bmat{-\M +\He(\Z^T(\F^T\T_0 + \F_0^T\delta\T)\bS) & \Z^T\delta\F^T & \bS^T\delta\T^T \\ * & -\W^{-1} & \mathbf{0} \\ * & * & -\W }  \prec 0.\]
This expression can be expanded using $\V\U = \F$ to get
\[\bmat{- \M +\Gamma(\U,\V,\T) & \Z^T(\V_0^T\delta\U^T + \delta\V^T\U_0^T) & \bS^T\delta\T^T \\ (\U_0\delta\V + \delta\U\V_0)\Z & -\W^{-1} & \mathbf{0} \\ \delta\T\bS & \mathbf{0} & -\W } + 
 \He\left(\bmat{\Z^T \\ \mathbf{0} \\ \mathbf{0}}\delta\V^T\delta\U^T\bmat{\T_0\bS & \I & \mathbf{0}} \right)  \prec 0.\]
 Using Young's relation with a Schur complement once again, results in \autoref{eq:lmi_ico3}.
\end{proof}


\begin{lemma}\label{lem:ico_4}
For $\M=\M^\trans\in \mathbb{R}^{n \times n}$, $\V\in\mathbb{R}^{n\times p}$, $\U \in \mathbb{R}^{m \times n}$, and $m,n,p\in\mathbb{Z}$, the quartic matrix expression $\z^T\V^T\U^T\U\V \z \prec \M$ if and only if
   \begin{equation}\label{eq:lmi_ico4}
    \bmat{\bmat{-\M + \Gamma(\V,\U) & \z^T(\delta \V^T \U_0^T + \V_0^T\delta\U^T) \\ * & -\I} & \bmat{\z^T\V_0^T\U_0^T \\ \I}\delta \U & \bmat{\z^T \\ \mathbf{0}} \delta \V \\ * & -\W^{-1} & \mathbf{0} \\ * & * & -\W } \prec0,
   \end{equation}
where $\Gamma(\U,\V) = \z^T(\V_0^T\U_0^T\U_0\V_0 + \V_0^T\U_0^T\delta\U \V_0 + \V_0^T\U_0^T\U_0 \delta \V  + \delta \V^T\U_0^T\U_0 \V_0 + \V_0^T\delta \U^T \U_0 \V_0)\z$, $\V = \V_0 + \delta\V, \U = \U_0 + \delta\U$, and $\W$ is some symmetric, positive definite matrix. 
\end{lemma}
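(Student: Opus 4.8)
The quartic form factors as a perfect square, $\z^{T}\V^{T}\U^{T}\U\V\z=(\U\V\z)^{T}(\U\V\z)$, so the plan is to treat it as a \emph{quadratic} expression in the composite matrix $\F:=\U\V$ and only afterwards re-expand $\F$ in terms of $\U$ and $\V$. First I would invoke Lemma~\ref{lem:ico_1} with the substitutions $\V\mapsto\F$ and $\Z\mapsto\z$, writing $\F=\F_{0}+\delta\F$ with $\F_{0}:=\U_{0}\V_{0}$. Since the statement of Lemma~\ref{lem:ico_1} is a pure matrix identity (it does not exploit that $\delta\V$ is an independent variable), it applies verbatim and yields that $\z^{T}\F^{T}\F\z\prec\M$ is equivalent to
\begin{equation*}
\bmat{-\M+\He(\z^{T}\F_{0}^{T}\delta\F\z)+\z^{T}\F_{0}^{T}\F_{0}\z & \z^{T}\delta\F^{T}\\ * & -\I}\prec 0 .
\end{equation*}

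Next I would substitute the expansion $\delta\F=\U\V-\U_{0}\V_{0}=\U_{0}\,\delta\V+\delta\U\,\V_{0}+\delta\U\,\delta\V$ into both the $(1,1)$ and $(1,2)$ blocks above. The terms that are constant or affine in $(\delta\U,\delta\V)$ collect, in the $(1,1)$ slot, into precisely the $\Gamma(\U,\V)$ of the statement, and in the $(1,2)$ slot into $\z^{T}(\delta\V^{T}\U_{0}^{T}+\V_{0}^{T}\delta\U^{T})$; what remains are the genuinely bilinear residuals $\He(\z^{T}\V_{0}^{T}\U_{0}^{T}\,\delta\U\,\delta\V\,\z)$ in the $(1,1)$ block and $\z^{T}\delta\V^{T}\delta\U^{T}$ in the $(1,2)$ block. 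These two residuals are, up to transposes, the Hermitian part of a single rank-structured product $\bmat{\z^{T}\V_{0}^{T}\U_{0}^{T}\\ \I}\,\delta\U\;\delta\V\,\bmat{\z & \mathbf{0}}$, so I would then apply Young's relation with a free symmetric $\W\succ0$ — exactly as in the proofs of Lemma~\ref{lem:ico_2} and Lemma~\ref{lem:ico_3} — to bound this Hermitian term by $\bmat{\z^{T}\V_{0}^{T}\U_{0}^{T}\\ \I}\delta\U\,\W\,\delta\U^{T}\bmat{\z^{T}\V_{0}^{T}\U_{0}^{T}\\ \I}^{T}+\bmat{\z^{T}\\ \mathbf{0}}\delta\V\,\W^{-1}\,\delta\V^{T}\bmat{\z^{T}\\ \mathbf{0}}^{T}$. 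Two Schur complements, about $-\W^{-1}$ and $-\W$, then pull $\delta\U$ and $\delta\V$ into their own block rows and columns and reproduce \eqref{eq:lmi_ico4}. For the converse direction I would run the Schur complements backwards and use that, $\W$ being a free variable, the Young bound is tight, so no conservatism is introduced.

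I expect the main obstacle to be the bookkeeping in the middle step rather than any conceptual difficulty: one must check that, after substituting $\delta\F=\U_{0}\delta\V+\delta\U\V_{0}+\delta\U\delta\V$, the affine contributions from \emph{both} $\He(\z^{T}\F_{0}^{T}\delta\F\z)$ and the off-diagonal $\z^{T}\delta\F^{T}$ assemble exactly into the stated $\Gamma(\U,\V)$ and $\z^{T}(\delta\V^{T}\U_{0}^{T}+\V_{0}^{T}\delta\U^{T})$, and that the two leftover bilinear terms have compatible dimensions so they can be packaged into one outer-product Hermitian form absorbed by a single pair $(\W,\W^{-1})$. Given Lemmas~\ref{lem:ico_1}--\ref{lem:ico_3}, the Young's-relation/Schur-complement machinery itself is routine.
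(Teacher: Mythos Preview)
Your proposal is correct and follows essentially the same route as the paper: set $\F=\U\V$, apply Lemma~\ref{lem:ico_1} to obtain the $2\times 2$ block inequality, expand $\delta\F=\U_0\delta\V+\delta\U\V_0+\delta\U\delta\V$, isolate the bilinear residual as $\He\!\left(\bmat{\z^{T}\V_0^{T}\U_0^{T}\\ \I}\delta\U\,\delta\V\bmat{\z & \mathbf{0}}\right)$, and finish with Young's relation plus Schur complements. The paper's proof is terser but structurally identical; your explicit tracking of which terms populate $\Gamma(\U,\V)$ versus the off-diagonal block is exactly the bookkeeping the paper leaves implicit.
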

\begin{proof}
Let $\F = \U\V$ and $\F = \F_0 + \delta \F$. Using Lemma \ref{lem:ico_1}, $\z^T\F^T\F\z$ is overbounded with the \ac{lmi}
\[\bmat{-\M + \z^T(\F_0^T\F_0 + \F_0^T\delta\F +\delta\F^T\F_0)\z & x^T\delta\F^T \\ \delta\F & -\I}  \prec 0.\]
Now recall that $\F_0 = \U_0\V_0$ and $\delta\F = \U_0\delta \V_0 + \delta \U \V_0 + \delta \U \delta \V.$ The previous \ac{lmi} can be expanded to
\[\bmat{-\M + \Gamma & \z^T(\delta \V^T \U_0^T + \V^T\delta\U^T) \\ (\delta\U \V_0 + \U_0 \delta \V)\z & -\I} +\He\left(\bmat{\z^T\V_0^T\U_0^T \\ \I}\delta \U \delta \V \bmat{\z & \mathbf{0}}\right)  \prec 0.\] 
Once again, use Young's relation and Schur complement to acheive \autoref{eq:lmi_ico4}.
\end{proof}

When overbounding \cref{eq:min_forward_prop_prob}, the matrix design variables $\mathcal{C}:(\Ahat, \Bhat, \Chat,\Dhat)$ become $\mathcal{C} + \delta \mathcal{C}:$ $(\Ahat_0 + \delta \Ahat, \Bhat_0 + \delta \Bhat, \Chat_0 + \delta \Chat, \Dhat_0 + \delta \Dhat)$.
The objective function in \cref{eq:min_forward_prop_prob} is expanded to
\begin{gather}\label[prob]{eq:min_forward_prop_exp}
\begin{split}
    \min_{\delta\Ahat, \delta\Bhat, \delta\Chat, \delta\Dhat} \frac{1}{N}\sum_{k=0}^N \bu_{k,0}^T\bu_{k,0} - \He(\bu_{k,0}^T\Chat\xhat_{k,0}) - \He(\bu_{k,0}^T\Dhat \uhat_{k,0}) + \He(\xhat_{k,0}^T\Chat^T\Dhat\uhat_{k,0}) + 
    \\
    \xhat_{k,0}^T\Chat^T\Chat\xhat_{k,0} + \uhat_{k,0}^T\Dhat^T\Dhat\uhat_{k,0} +
    \bu_{k,1}^T\bu_{k,1}^T - \He(\bu_{k,1}^T\Chat\Ahat\xhat_{k,0}) -\He(\bu_{k,1}^T\Chat\Bhat\uhat_{k,0}) \\ -\He(\bu_{k,1}^T\Dhat\uhat_{k,1}) + \xhat_{k,0}^T\Ahat^T\Chat^T\Chat\Ahat\xhat_{k,0} + 
    \He(\xhat_{k,0}^T\Ahat^T\Chat^T\Dhat\uhat_{k,1}) + \\ 
    \He(\uhat_{k,0}^T\Bhat^T\Chat^T\Chat\Ahat\xhat_{k,0}) + 
    \uhat_{k,0}^T\Bhat^T\Chat^T\Chat\Bhat\uhat_{k,0} + \He(\uhat_{k,0}^T\Bhat^T\Chat^T\Dhat\uhat_{k,1}) + \uhat_{k,1}^T\Dhat^T\Dhat\uhat_{k,1}.
\end{split}
\end{gather}
Using Lemmas 4-7, each non-convex term within the objective function is bounded using a design variable $m_{k,i}.$  These design variables are then substituted into \autoref{eq:min_forward_prop} to create a conservative, convex objective, while a corresponding \ac{lmi} is added to the constraints of \cref{eq:min_forward_prop_prob} ($\text{LMI}_{k,i}$) to preserve the term's overbound. This process creates the optimization problem,
\begin{subequations}\label[prob]{eq:min_forward_prop_cvx}
    \begin{align}
        \begin{split}\label{eq:obj_ico}
            \min_{\delta\Ahat, \delta\Bhat, \delta\Chat, \delta\Dhat} \frac{1}{N}\sum_{k=0}^N \bu_{k,0}^T\bu_{k,0} - \He(\bu_{k,0}^T\Chat\xhat_{k,0}) - \He(\bu_{k,0}^T\Dhat \uhat_{k,0}) + \He(\xhat_{k,0}^T\Chat^T\Dhat\uhat_{k,0}) \\ + \xhat_{k,0}^T\Chat^T\Chat\xhat_{k,0} + \uhat_{k,0}^T\Dhat^T\Dhat\uhat_{k,0} +
            \bu_{k,1}^T\bu_{k,1}^T - m_{k,1} -m_{k,2} \\ -\He(\bu_{k,1}^T\Dhat\uhat_{k,1}) + m_{k,3} + m_{k,4} + 
            m_{k,5} + m_{k,6} + m_{k,7} + \uhat_{k,1}^T\Dhat^T\Dhat\uhat_{k,1}
        \end{split}
\\
    &\quad \quad \quad \quad \quad \quad\quad \quad \quad \quad \quad \quad\quad \quad \quad \quad \quad \quad\quad \quad \quad \quad \quad \quad\quad \quad \quad \quad\quad \text{s.t.      }\text{\cref{eqn:convex_constraint}}
\\  \label[cons]{eq:overbound_constr}
      &\quad \quad \quad \quad \quad \quad\quad \quad \quad \quad \quad \quad\quad \quad \quad \quad \quad \quad\quad \quad \quad \quad \quad \quad\quad \quad \quad \quad\quad\quad\quad\text{LMI}_{k,1:7} \prec 0.
    \end{align}
\end{subequations}
Overbounding the objective function results in numerous \ac{lmi} constraints -- dependent on the total amount of data, as each data point requires overbounding for each nonconvex element of the objective function with which it interacts. These \acp{lmi} are represented as $\text{LMI}_{k,1:7}$ in \cref{eq:min_forward_prop_cvx}. 

To learn a controller using the new, convex objective described in \cref{eq:min_forward_prop_cvx}, an initial feasible controller, $\mathcal{C}_0$, is required. 
A random initialization scheme can be used to encourage exploration of the solution space. In the random initialization, $\mathcal{C}_0:(\Ahat_0, \Bhat_0, \Chat_0, \Dhat_0)$, is generated by sampling each matrix element from a normal distribution. Then, $\mathcal{C}_0$ is projected into the QSR-dissipative space, using \autoref{eqn:projection}.
Once a feasible initial controller is found, \cref{eq:min_forward_prop_prob} is iteratively solved, using $\delta\mathcal{C}$ to update the next iteration's starting point. This process is repeated until the change in objective function from one iteration to the next is minimal. \cref{alg:ico} summarizes the process of imitation learning with \ac{ico}.

\begin{algorithm}
\caption{ICO for Dissipative Imitation Learning}\label[alg]{alg:ico}
    \begin{algorithmic}
        \State \textbf{Input: }Training Data: $(\bu, \y, \x_0)$
        \State \textbf{Init: }$\mathcal{C}_0: (\Ahat_0, \Bhat_0, \Chat_0, \Dhat_0)$ using Random Initialization
        \State \textbf{Set: }$i = 0$, $\Delta \loss = \infty$
        \While{$\Delta \loss(\bu, \yhat) \geq \beta$}
        \State \textbf{Set: }$\mathcal{C}_0 = \mathcal{C}_i$
        \State \textbf{Solve: } \cref{eq:min_forward_prop_cvx}
        \State $\mathcal{C}_{i+1} = \mathcal{C}_i + \delta \mathcal{C}$
        \State $\Delta \loss = |\loss_{i} - \loss_{i-1}|$
        \State $i = i+1$
        \EndWhile
        \State \textbf{Output: }Learned Controller: $\mathcal{C}_{i+1}: (\Ahat, \Bhat, \Chat, \Dhat)$
    \end{algorithmic}
\end{algorithm}
\begin{theorem}
    Assume that the controllable and observable LTI system, $\mathcal{C}_{i =0}:(\Ahat, \Bhat, \Chat, \Dhat),$ is an initial controller for \cref{alg:ico}. Let $\mathcal{C}_0$ be dissipative with respect to some $\Q,$ $\bS$, and $\R$ and satisfy \cref{eq:overbound_constr}. Then, $\mathcal{C}_{i+1} = \mathcal{C}_i + \delta \mathcal{C}$  is also $\Q\bS\R$ dissipative, where $\delta \mathcal{C}$ is the minimizer of \cref{eq:min_forward_prop_cvx}. Further, as \cref{alg:ico} iterates, the objective function of \autoref{eq:obj_ico}, $\loss$, is non-increasing, leading to convergence to a local minimum, $\mathcal{C}^*$.
    \begin{proof}
        Using Lemma \ref{lem:QSRc}, $\mathcal{C}_0$ satisfies \autoref{eqn:almost_convex_constraint}. By induction, every iteration of \ac{ico}, $\mathcal{C}_{i+1}: (\Ahat_i + \delta\Ahat, \Bhat_i + \delta\Bhat, \Chat_i + \delta\Chat, \Dhat_i + \delta\Dhat)$, is feasible and therefore $\mathcal{C}_{i}$ satisfies \autoref{eqn:almost_convex_constraint} for $i\geq0$. Consequently, these controllers are all dissipative. Moreover, $\delta\mathcal{C}$ either improves the loss of $\mathcal{C}_{i+1}$ or is zero, leading to a non-increasing loss function across iterations. Because the loss function is bounded below by zero, convergence to a local minima is guaranteed.
    \end{proof}
\end{theorem}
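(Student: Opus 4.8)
The plan is to prove the two assertions in turn: first that dissipativity is preserved along the iteration, by induction on the iteration index $i$; then that the loss is monotone, by the majorize--minimize argument that underlies \ac{ico}. For the first assertion, the inductive hypothesis is that at the start of iteration $i$ there exists $\bP_i\succ0$ for which $(\mathcal{C}_i,\bP_i)$ satisfies \cref{eqn:almost_convex_constraint}. The base case is the theorem's hypothesis: $\mathcal{C}_0$ is $(\Q,\bS,\R)$-dissipative with $\Q\prec0$ and controllable and observable, so Corollary~\ref{lem:QSRc} supplies such a $\bP_0$, and $\mathcal{C}_0$ satisfies \cref{eq:overbound_constr} by assumption. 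For the inductive step, run iteration $i$ of \cref{alg:ico} with linearization and overbound base point $(\mathcal{C}_i,\bP_i)$, i.e.\ $\tilde{\bP}=\bP_i$. First, feasibility of \cref{eq:min_forward_prop_cvx}: with $\delta\mathcal{C}=0$ and $\bP=\bP_i$ the $(4,4)$ block of \cref{eqn:convex_constraint} becomes $-2\bP_i^{-1}+\bP_i^{-1}\bP_i\bP_i^{-1}=-\bP_i^{-1}$, so \cref{eqn:convex_constraint} reduces exactly to \cref{eqn:almost_convex_constraint} for $(\mathcal{C}_i,\bP_i)$, which holds by hypothesis; and, since Lemmas~\ref{lem:ico_1}--\ref{lem:ico_4} have no conservatism at $\delta\mathcal{C}=0$, each slack $m_{k,j}$ in \cref{eq:overbound_constr} can be set to the value of the term it bounds evaluated at $\mathcal{C}_i$ (plus an arbitrarily small positive margin for the strict inequalities, with any $\W\succ0$). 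Hence \cref{eq:min_forward_prop_cvx} is feasible, and its minimizer $\delta\mathcal{C}$ comes with some $\bP^\star\succ0$ satisfying \cref{eqn:convex_constraint}. Now, for all $\bP,\tilde{\bP}\succ0$ we have $\bP^{-1}\succeq 2\tilde{\bP}^{-1}-\tilde{\bP}^{-1}\bP\tilde{\bP}^{-1}$ (the difference equals $(\bP^{-1}-\tilde{\bP}^{-1})\bP(\bP^{-1}-\tilde{\bP}^{-1})\succeq0$), so the left-hand matrix of \cref{eqn:convex_constraint} dominates that of \cref{eqn:almost_convex_constraint} in the positive-semidefinite order; thus \cref{eqn:convex_constraint} implies \cref{eqn:almost_convex_constraint} for $(\mathcal{C}_{i+1},\bP^\star)$ (Remark~\ref{ref: rem1}). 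Setting $\bP_{i+1}:=\bP^\star$ closes the induction, and by the sufficiency direction of Corollary~\ref{lem:QSRc} every $\mathcal{C}_i$, in particular $\mathcal{C}_{i+1}=\mathcal{C}_i+\delta\mathcal{C}$, is $(\Q,\bS,\R)$-dissipative.

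For the second assertion, write $\loss_i$ for the true mean-squared error of $\mathcal{C}_i$ and $\widehat{\loss}_i(\cdot)$ for the convexified objective \cref{eq:obj_ico} assembled at $\mathcal{C}_i$. Two facts about the \ac{ico} surrogate are needed, both following from Lemmas~\ref{lem:ico_1}--\ref{lem:ico_4}: (i) it majorizes the true loss, $\widehat{\loss}_i(\delta\mathcal{C})\ge\loss(\mathcal{C}_i+\delta\mathcal{C})$ for every $\delta\mathcal{C}$ feasible for \cref{eq:min_forward_prop_cvx}, because each signed slack $\pm m_{k,j}$ in \cref{eq:obj_ico} overbounds the nonconvex term it replaces; and (ii) it is tight at the base point, so $\delta\mathcal{C}=0$ is feasible and the optimal value of \cref{eq:min_forward_prop_cvx} is at most $\loss_i$ (using the exact slacks up to the small margin). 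Consequently, with $\delta\mathcal{C}^\star$ the minimizer, $\loss_{i+1}=\loss(\mathcal{C}_i+\delta\mathcal{C}^\star)\le\widehat{\loss}_i(\delta\mathcal{C}^\star)\le\loss_i$, so $\{\loss_i\}$ is non-increasing; being bounded below by $0$ it converges by the monotone convergence theorem, and the standard \ac{ico}/majorize--minimize argument \cite{Warner2017a} then identifies a limit point $\mathcal{C}^\star$ of $\{\mathcal{C}_i\}$ as a stationary point of the constrained problem \cref{eqn:nonconvex_problem}.

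The conceptual ingredients are in place once the \ac{ico} lemmas are granted, so I expect the main obstacle to be bookkeeping: confirming that $\delta\mathcal{C}=0$ with the ``exact'' choice of the $m_{k,j}$ is genuinely feasible despite the strict inequalities in \cref{eq:overbound_constr} (hence the $+\epsilon$ margins and the limit $\epsilon\to0$), and checking that the Young's-relation residuals introduced in Lemmas~\ref{lem:ico_2}--\ref{lem:ico_4} all vanish at $\delta\mathcal{C}=0$ so that (i) and (ii) above hold at the same base point. A secondary gap is that monotone, bounded-below convergence of $\loss_i$ is strictly weaker than ``convergence to a local minimum $\mathcal{C}^\star$''; making that step rigorous needs the usual compactness-and-continuity hypotheses of majorize--minimize theory, which I would either invoke by citation or state explicitly.
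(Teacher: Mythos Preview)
Your proposal is correct and follows essentially the same approach as the paper's proof: induction on $i$ via Corollary~\ref{lem:QSRc} for dissipativity preservation, and the ``$\delta\mathcal{C}=0$ is feasible, hence the minimizer does no worse'' argument for monotonicity, with boundedness below giving convergence. Your version is in fact more careful than the paper's---you explicitly verify feasibility at $\delta\mathcal{C}=0$, justify that \cref{eqn:convex_constraint} implies \cref{eqn:almost_convex_constraint} via the $\bP^{-1}\succeq 2\tilde{\bP}^{-1}-\tilde{\bP}^{-1}\bP\tilde{\bP}^{-1}$ identity, and correctly flag that monotone bounded convergence of $\loss_i$ does not by itself yield convergence of $\mathcal{C}_i$ to a local minimizer without further hypotheses, a point the paper glosses over.
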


Some convex overbounding techniques, Lemma \ref{lem:ico_3} for example, require the use of a weighting matrix, $\W.$ This weighting matrix can either be pre-set or treated as an additional design variable. While its value does not affect the overall result of the overbounding process, it can affect the rate of convergence \cite{Warner2017a}. For this \ac{ico} process, the weighting matrices were treated as design variables -- constrained to be symmetric positive definite matrices. The non-convex term $\W^{-1}$  is overbounded by the inequality $\W^{-1}  \succeq 2\tilde{\W} - \tilde{\W}\W\tilde{\W}$, where $\tilde{\W}$ is a constant and can be initialized as some positive definite matrix.

\subsection{Projected Gradient Descent}


\ac{pgd} is a special case of proximal gradient descent and  does not require a convex objective function \cite{bertsekas1997nonlinear}. Like \ac{ico}, the non-convex objective function for \ac{pgd} uses controller dynamics, $\mathcal{C}: (\Ahat, \Bhat, \Chat, \Dhat)$, to propagate the initial controller state, $\xhat_0$  (assumed to be equivalent to the initial plant state) and the known series of controller inputs, $\uhat_k,$ forward in time to predict a series of outputs, $\yhat.$ For some number of time steps, $n$, $\yhat$ can be predicted using,
\begin{equation}\label{eq:proj_forward}
    \bmat{\yhat(0) \\\yhat(1) \\ \yhat(2) \\ \vdots \\ \yhat(n)} = \bmat{\Chat \\ \Chat\Ahat \\ \Chat\Ahat^2 \\ \vdots \\ \Chat\Ahat^{n}}\xhat(0) + \bmat{\Dhat & 0 & 0 & ... & 0 \\ \Chat\Bhat & \Dhat & 0 & ... & 0 \\\Chat\Ahat\Bhat & \Chat\Bhat & \Dhat & ... & 0 \\ \vdots & \vdots & \vdots & ... & \vdots \\ \Chat\Ahat^{n}\Bhat & \Chat\Ahat^{n-1}\Bhat & \Chat\Ahat^{n-2}\Bhat& ...& \Dhat}\bmat{\bu(0) \\ \uhat(1) \\ \uhat(2) \\ \vdots \\ \uhat(n)}.
\end{equation}
\autoref{eq:proj_forward} uses matrices commonly known as the observability matrix and Toeplitz matrix in systems identification literature \cite{van1994n4sid}. The predicted $\yhat$ is then compared to the true control inputs, $\bu$, using \ac{mse} loss,
\begin{equation}\label{eq:loss_pgd}
    \loss(\bu, \yhat) = \frac{1}{N}\sum_{i=1}^N||\bu - \yhat||_2^2.
\end{equation}

In gradient descent, the gradient of objective function, \autoref{eq:loss_pgd}, is scaled by the learning rate, $\epsilon$, and used to update the controller parameters -- producing $\tilde{\mathcal{C}}: (\Atilde, \Btilde, \Ctilde, \Dtilde).$ Note that $\tilde{\mathcal{C}}$ is not guaranteed to have the required dissipativity characteristics to ensure closed loop stability. After each update, the parameters of $\tilde{\mathcal{C}}$ must be projected into the desired $\Q\bS\R$-dissipative convex subspace to produce the controller, $\mathcal{C}$, which maintains desired dissipativity characteristics. This projection is defined as  
\begin{equation}\label{eqn:projection}
    \begin{aligned}
       \Pi(\tilde{\mathcal{C}}) = \argmin_{\Ahat, \Bhat, \Chat, \Dhat \text{ satisfying (\ref{eqn:convex_constraint})}} ||(\Atilde, \Btilde, \Ctilde, \Dtilde) - (\Ahat, \Bhat, \Chat, \Dhat)||_F^2,
    \end{aligned}
\end{equation}
which is similar to that of Reference \cite{Havens2021}. \cref{eqn:convex_constraint} creates a subspace of the design variables that is closed and convex. Therefore, the projection is guaranteed to be unique \cite{nesterov2003introductory}. \ac{pgd} for dissipative imitation learning is summarized in \cref{alg:pgd}. It can be efficiently implemented using python autograd libraries, such as Pytorch \cite{paszke2019pytorch}.

\begin{algorithm}
\caption{PGD for Dissipative Imitation Learning}\label[alg]{alg:pgd}
    \begin{algorithmic}
        \State \textbf{Input: }Training Data: $(\bu, \y, \x_0)$
        \State \textbf{Set: }$i = 0$, $\Delta \loss = \infty$
        \State \textbf{Init: }$\mathcal{C}_i: (\Ahat_i, \Bhat_i, \Chat_i, \Dhat_i)$ using Random Initialization
        \While{$\Delta \loss(\bu, \yhat) \geq \beta$}
        \State $\tilde{\mathcal{C}} = \mathcal{C}_i + \epsilon\nabla\loss(\bu, \yhat)|_{\mathcal{C}}$ 
        \State $\mathcal{C}_{i+1} = \Pi(\Atilde, \Btilde, \Ctilde, \Dtilde)$
        \State $\Delta \loss = |\loss_{i} - \loss_{i-1}|$
        \State $i = i+1$
        \EndWhile
        \State \textbf{Output: }Learned Controller: $\mathcal{C}_{i+1}: (\Ahat, \Bhat, \Chat, \Dhat)$
    \end{algorithmic}
\end{algorithm}

Some asymptotic convergence guarantees exist for \ac{pgd} when minimizing a nonconvex objective functions in a closed, convex subspace \cite{bertsekas1997nonlinear}. However, the objective function must be continuously differentiable and its gradient must be Lipschitz continuous. The step size must be carefully selected using the Lipschitz constant. This result has been extended to non-asymptotic cases as well \cite{ghadimi2016mini}. In practice, the Lipschitz constant can be difficult to determine. Instead, a diminishing step size can be chosen to increase the likelihood of descent \cite{bertsekas1997nonlinear}. An exponentially decaying learning rate was used for \cref{alg:pgd}. Although \ac{pgd} lacks convergence assurances, its ease of implementation and speed make it a useful candidate for training controllers.

\section{Numerical Examples}

The proposed behavior cloning procedures were used to learn expert behavior for control of a linear, passive system and a nonlinear, QSR-dissipative system. The expert controller of the linear system is an LQR-optimal state feedback controller, but the learned controller only has access to output data, representing the ``privileged expert'' training case. The expert of the nonlinear system is a more complex \ac{nmpc}, representing the case where learning is used to imitate a controller whose actions demand computation times that are untenable for real-time control. In both examples, two types of sparse training data sets are examined: one assumes access to only initial plant states and trains the controllers with small initial trajectory segments, while the other assumes constant access to plant states and uses a sliding window of trajectory segments to train the controller. These examples demonstrate the viability of a learned dynamic output feedback controller with sparse data sets and show the ability of the learned controllers to mimic complex desired behavior, while maintaining closed loop stability.

\subsection{Control of a Linear System}

Dynamic output feedback controllers were learned for mass-spring-damper systems with a chain of three unit masses. Ten randomized, passive mass-spring-damper systems were created with spring and damping coefficients randomly sampled from uniform distributions over $[1, 5]$ and $[1, 10]$ respectively. The inputs to the system are a force applied to each unit mass, while the system outputs velocity with force feedthrough. The feedthrough matrix was designed as $\phi \I$, where $\phi$ is a scalar randomly sampled from the uniform distribution over $[0.05, 1].$ 

For each of the ten randomized systems, an expert policy was designed as a noisy LQR-optimal state-feedback controller $\bu=-\K\x + \e$, where $\e$ is noise, $\x$ is the state of the system, and $\K=\E^{-1}\B^T\Pi$, where $\Pi$ solves $\A^T\Pi + \Pi\A -\Pi\B\E^{-1}\B^T\Pi + \C^T\F\C = \zero,$ and $\E$ and $\F$ are $10$ and $1.$ Expert data to train the learned controllers was generated by randomly sampling an initial state $\x_0$ from a Gaussian distribution, $N(0,1)$ and then controlling the system states to a reference of zero while inputs ($\bu$), outputs ($\y$), and states ($\x$) were recorded. 

Each controller was learned using either \ac{ico} or \ac{pgd} using \ac{lmi} \ref{eqn:convex_constraint} as the method of enforcing passivity, with $\Q = 0,$ $\bS = \frac{1}{2}\I$, and $\R = 0.$ Because \ac{ico} uses only two data points per trajectory, the same standard was enforced on all controllers in the training process. Both methods of using training data, initial segments and a sliding window, were used for training and were compared.
Both \ac{pgd} and \ac{ico} were trained for 150 iterations, or until the change in the cost function between two iterations was less than $|1e-3|.$ Dissipativity was enforced on the learned controller throughout the training process. To provide comparison, dynamic output feedback controllers were learned using unconstrained \ac{ico} (ICO-NC) and standard \ac{gd} -- each of which have no inherent stability guarantees. An additional comparison was made by training a \ac{nn} controller with \ac{gd} (NN+GD) to imitate the expert data. The network was three layers -- each layer with 150 hidden units -- with the \ac{relu} activation function acting on each layer of the network.

The final performance of each learned controller was evaluated by comparing the states of the controlled system when controlled by the expert versus when controlled by the learned controller. New trajectories created from initial conditions drawn randomly from the distribution $N(0,20^2)$ were used to test the controllers. \ac{mse} is used as the evaluation metric, $\frac{1}{N}||\x_e(k) - \x(k)||_2^2$, where $\x_e(k)$ are the plant states when controlled by the expert, $\x(k)$ are the plant states when controlled by the learned controller, and $N$ is the number of time steps in the simulation. Plant and controller noise added in simulation are $N(0,0.04^2)$ and $N(0,0.02^2)$ respectively. 

\autoref{tab:linear_multTraj} and \autoref{tab:linear_Segments} show results from training and testing each of the learned controllers. While \ac{ico} and \ac{pgd} learned controllers remained stable 100\% of the time during testing, both \ac{ico}-NC and \ac{gd} learned controllers rarely, if ever, remained stable. Of the controllers with no stability guarantees, the learned \ac{nn} controllers had the highest percentage of controllers remaining stable. Note also that the \ac{pgd} and \ac{ico} training processes often did not result in the most change in cost, which may indicate that the other learning processes overfit to the training data and therefore performed poorly in testing.

\autoref{fig:linear_median} shows the median \ac{mse} of the learned controllers for both training data types, while the poorest controller performance is shown in \autoref{fig:linear_max} and the quartiles of controller performance in \autoref{fig:linear_quartiles}. The performance of learned controllers trained using \ac{ico} without dissipativity constraints or \ac{gd} have been excluded from all figures, as their performance was so poor, the \ac{mse} was greater than $1^{100}.$ \ac{pgd} and \ac{ico} learned controllers performed similarly well on the test data set, with \ac{ico} controllers performing slightly better with additional data. However, the \ac{ico} training process required significantly more training time in comparison to \ac{pgd}.

   
\begin{table}[h]
\centering
\hspace*{-2.2\leftmargin}
\begin{tabular}{ |p{1.5cm}|| p{0.5cm}|p{0.5cm}|p{0.5cm}|p{0.5cm}|p{0.5cm}||p{0.95cm}|p{0.95cm}|p{0.95cm}|p{0.95cm}|p{0.95cm}||p{0.79cm}|p{0.79cm}|p{0.79cm}|p{0.79cm}|p{0.79cm}||  }
 \hline
 \multicolumn{16}{|c|}{Learned Controller - Initial Segments} \\
 \hline
 \multirow{3}{*}{Training} & \multicolumn{5}{|c|}{Percent Stable} & \multicolumn{5}{|c|}{Percent Change in Cost} & \multicolumn{5}{|c|}{Training Time (s)}\\
 \cline{2-16}
  & \multicolumn{15}{|c|}{Number of Segments} \\
 \cline{2-16}
  Method & 2 & 25 & 50 & 100 & 200 & 2 & 25 & 50 & 100 & 200 & 2 & 25 & 50 & 100 & 200 \\
  \hline
  ICO & 100 & 100 & 100 & 100 & 100 & -93.22 & -84.32 & -82.04 & -82.42 & -82.08 & 28.40 & 3.22e2 & 6.80e2 & 1.34e3 & 2.76e3\\
  \hline
  ICO-NC& 24 & 0 & 0 & 0 & 0 & -99.28 & -99.58 & -99.63 & -99.73 & -99.76 & 68.02 & 7.34e2& 1.49e3&2.88e3 &6.34e3\\
  \hline
  PGD & 100 & 100 & 100 & 100 & 100 & -75.40 & -46.58 & -39.50 & -38.37 & -37.79 & 14.32 & 14.88 & 15.73 & 15.78 & 15.77 \\
  \hline
  GD& 0 & 0 & 0 & 0 & 1 & -95.35 & -88.77 & -87.77 & -87.68 & -87.70 & 2.47 & 2.77 &  2.80 & 2.86 & 2.99 \\
  \hline
  GD+NN& 21 & 31 & 20 & 20 & 20 & -95.32 & -50.11 & -49.89 & -46.41 & -45.14 & 0.27 & 0.34 & 0.42 & 0.63 & 0.79\\
  \hline

\end{tabular}

\caption{Percentage of controllers that remain stable during testing, percent change in cost during training, and average training time when using initial segments as training data trajectories.}
\label{tab:linear_multTraj}
\end{table}
\begin{table}[h]
\centering
\hspace*{-2.2\leftmargin}
\begin{tabular}{|p{1.5cm}|| p{0.5cm}|p{0.5cm}|p{0.5cm}|p{0.5cm}|p{0.5cm}||p{0.95cm}|p{0.95cm}|p{0.95cm}|p{0.95cm}|p{0.95cm}||p{0.79cm}|p{0.79cm}|p{0.79cm}|p{0.79cm}|p{0.79cm}|| }
 \hline
 \multicolumn{16}{|c|}{Learned Controller - Sliding Window} \\
 \hline
 \multirow{3}{*}{Training} & \multicolumn{5}{|c|}{Percent Stable} & \multicolumn{5}{|c|}{Percent Change in Cost} & \multicolumn{5}{|c|}{Training Time (s)}\\
 \cline{2-16}
  & \multicolumn{15}{|c|}{Number of Segments} \\
 \cline{2-16}
  Method & 2 & 25 & 50 & 100 & 200 & 2 & 25 & 50 & 100 & 200 & 2 & 25 & 50 & 100 & 200\\
  \hline
  ICO & 100 & 100 & 100 & 100 & 100 & -98.16 & -95.75 & -94.32 & -92.79 & -92.13 & 25.70 & 97.28 & 1.49e2& 3.65e2&8.60e2 \\
  \hline
  ICO-NC&43& 10& 5& 2& 0& -99.05 & -98.97 & -98.67 & -98.78 & -99.12 & 60.46 &2.55e2 & 3.89e2& 8.34e2&2.13e3\\
  \hline
  PGD& 100 & 100 & 100 & 100 & 100 & -85.23 & -73.04 & -65.85 &-64.01 & -66.65 & 14.51 & 11.93 & 11.17 & 11.35 & 11.90 \\
  \hline
  GD& 1 & 5 & 5 & 5 & 5 & -96.02 & -90.38 & -87.60 & -87.61 & -88.93 & 2.42 & 2.19 & 2.12 & 2.25 & 2.43 \\
  \hline
  GD+NN& 65 & 41 & 39 & 26 & 50 & -95.14 & -84.85 & -77.51 & -65.61 & -53.95 & 0.25 & 0.47 & 0.54 & 0.66 & 1.02\\
  \hline

\end{tabular}

\caption{Percentage of controllers that remain stable during testing, percent change in cost during training, and average training time when using a sliding window as training data trajectories.}
\label{tab:linear_Segments}
\end{table}

\begin{figure}[htp]
    \centering
     \hspace*{\fill}
    \begin{subfigure}[t]{0.45\textwidth}
        \centering
        \resizebox{\textwidth}{!}{
%
%
\definecolor{mycolor1}{rgb}{0.0,0.28,0.67}%
\definecolor{mycolor2}{rgb}{0.0,0.75,1.0}%
\definecolor{mycolor3}{rgb}{1.0,0.25,0}%
\definecolor{mycolor4}{rgb}{1.0,0.65,0}%
\definecolor{mycolor5}{rgb}{0.98,0.63,0.89}%
\begin{tikzpicture}

\begin{groupplot}[
group style = {group name=my fancy plots,
        group size=1 by 2,
        xticklabels at=edge bottom,
        vertical sep=0pt
    }, width=4in,
    xmin=0, xmax=202
]

\nextgroupplot[ymin = 10^(1.7), ymax = 1e+12,
ytick = {10^(3), 10^6, 10^9, 10^12},
yminorticks=false,
ymode=log,
axis x line*=top,
xmajorticks =false,
x axis line style={draw=none},
axis y discontinuity = parallel,
axis y line=left,
height = 2 in,
title style = {font=\bfseries},
title = {Median Controller Performance: Initial Segments}]

\addplot [color=mycolor5, dashdotted, line width=1.5pt, mark=square, mark options={solid, mycolor5}]
  table[row sep=crcr]{%
2	727.870233808639\\
25	11453.2062563813\\
50	70493.7568508761\\
100	25133729.7308617\\
200	290780.066714642\\
};

\nextgroupplot[ymin = 10^(0), ymax = 10^(2.2),
ymode=log,
ytick = {10^(0), 10^1, 10^2},
yminorticks=false,
height = 2 in,
xmin=0,
xmax=202,
xtick = {0, 25, 50, 75, 100, 200},
xlabel style={font=\color{white!15!black}},
xlabel={Number of Training Data Trajectories},
ylabel style={font=\color{white!15!black}, at={(axis description cs:0,1)}},
ylabel={$\log{\text{MSE}}$},
axis background/.style={fill=white},
axis x line*=bottom,
axis y line*=left,
legend style={at={(0.723,1.6)}, anchor=south west, legend cell align=left, align=left, draw=white!15!black}
]

\addplot [color=mycolor1, densely dotted, line width=1.5pt, mark=o, mark options={solid, mycolor1}]
  table[row sep=crcr]{%
2	14.2826809009194\\
25	16.0974537848286\\
50	16.6806040486284\\
100	17.0289046009463\\
200	16.8581199184947\\
};
\addlegendentry{ICO}

\addplot [color=mycolor3, densely dotted, line width=1.5pt, mark=triangle, mark options={solid, mycolor3}]
  table[row sep=crcr]{%
2	16.832466001678\\
25	15.0915252522591\\
50	16.6026169288553\\
100	17.724428898466\\
200	17.5290888183536\\
};
\addlegendentry{PGD}

\addplot [color=mycolor5, dashdotted, line width=1.5pt, mark=square, mark options={solid, mycolor5}]
  table[row sep=crcr]{%
2	727.870233808639\\
25	11453.2062563813\\
50	70493.7568508761\\
100	25133729.7308617\\
200	290780.066714642\\
};
\addlegendentry{GD+NN}

\end{groupplot}
\end{tikzpicture}
        \caption{Controllers were trained using initial segments of trajectories.}
        \label{subfig:multTraj_linear_median}
    \end{subfigure}
    \hfill
     \begin{subfigure}[t]{0.45\textwidth}
        \centering
        \resizebox{\textwidth}{!}{
%
%

\definecolor{mycolor1}{rgb}{0.0,0.28,0.67}%
\definecolor{mycolor2}{rgb}{0.0,0.75,1.0}%
\definecolor{mycolor3}{rgb}{1.0,0.25,0}%
\definecolor{mycolor4}{rgb}{1.0,0.65,0}%
\definecolor{mycolor5}{rgb}{0.98,0.63,0.89}%
\begin{tikzpicture}

\begin{groupplot}[
group style = {group name=my fancy plots,
        group size=1 by 2,
        xticklabels at=edge bottom,
        vertical sep=0pt
    }, width=4in,
    xmin=0, xmax=202
]

\nextgroupplot[ymin = 10^(1.7), ymax = 1e+12,
ytick = {10^(3), 10^6, 10^9, 10^12},
yminorticks=false,
ymode=log,
axis x line*=top,
xmajorticks =false,
x axis line style={draw=none},
axis y discontinuity = parallel,
axis y line=left,
height = 2 in,
title style = {font=\bfseries},
title = {Median Controller Performance: Sliding Window}]

\addplot [color=mycolor5, dashdotted, line width=1.5pt, mark=square, mark options={solid, mycolor5}]
  table[row sep=crcr]{%
25	463.366003633879\\
50	425254964272.48\\
100	43074766503.5901\\
200	186.216563786309\\
};

\nextgroupplot[ymin = 10^(0), ymax = 10^(2.2),
ymode=log,
ytick = {10^(0), 10^1, 10^2},
yminorticks=false,
height = 2 in,
xmin=0,
xmax=202,
xtick = {0, 25, 50, 75, 100, 200},
xlabel style={font=\color{white!15!black}},
xlabel={Number of Training Data Trajectories},
ylabel style={font=\color{white!15!black}, at={(axis description cs:0,1)}},
ylabel={$\log{\text{MSE}}$},
axis background/.style={fill=white},
axis x line*=bottom,
axis y line*=left,
legend style={at={(0.723,1.6)}, anchor=south west, legend cell align=left, align=left, draw=white!15!black}
]

\addplot [color=mycolor1, densely dotted, line width=1.5pt, mark=o, mark options={solid, mycolor1}]
  table[row sep=crcr]{%
2	17.8122750235148\\
25	16.578936845094\\
50	16.9268634541785\\
100	14.4353289394761\\
200	12.7234383929692\\
};
\addlegendentry{ICO}

\addplot [color=mycolor3, densely dotted, line width=1.5pt, mark=triangle, mark options={solid, mycolor3}]
  table[row sep=crcr]{%
2	13.8103188619606\\
25	14.8445024831058\\
50	14.9805982708387\\
100	15.9695422986655\\
200	15.3658307722247\\
};
\addlegendentry{PGD}

\addplot [color=mycolor5, dashdotted, line width=1.5pt, mark=square, mark options={solid, mycolor5}]
  table[row sep=crcr]{%
2	31.2495784731958\\
25	463.366003633879\\
50	425254964272.48\\
100	43074766503.5901\\
200	186.216563786309\\
};
\addlegendentry{GD+NN}

\end{groupplot}
\end{tikzpicture}
        \caption{Controllers were trained using a sliding window of trajectories.}
        \label{subfig:segments_linear_median}
    \end{subfigure}
    \hspace*{\fill}
    \caption{The median \ac{mse} of plant states when controlled with learned controllers compared to the expert controller.}
    \label{fig:linear_median}
\end{figure}
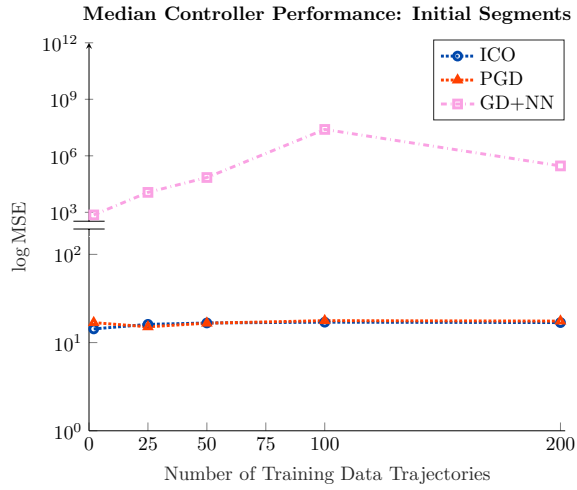
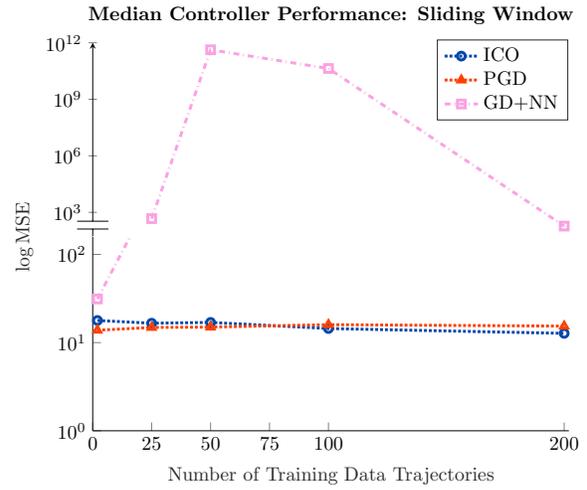

\vspace{-100pt}

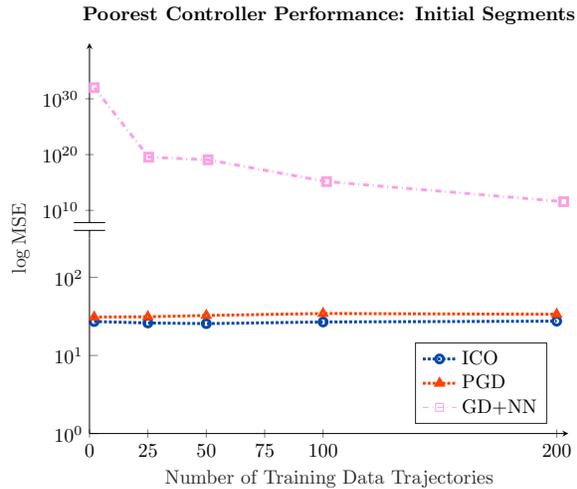
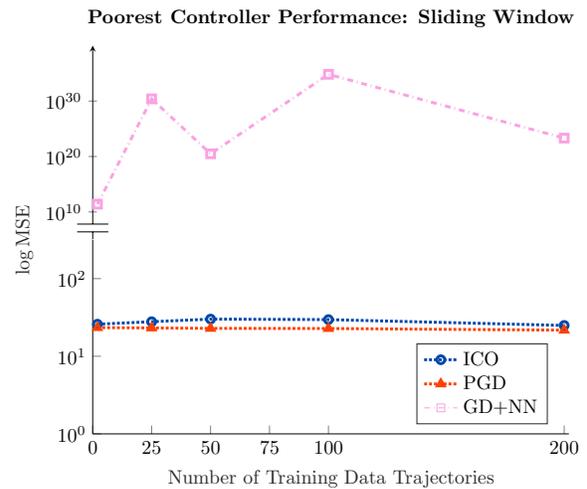
\begin{figure}[htp]
    \centering
    \hspace*{\fill}
    \begin{subfigure}[t]{0.45\textwidth}
        \centering
        \resizebox{\textwidth}{!}{
%
\definecolor{mycolor1}{rgb}{0.0,0.28,0.67}%
\definecolor{mycolor2}{rgb}{0.0,0.75,1.0}%
\definecolor{mycolor3}{rgb}{1.0,0.25,0}%
\definecolor{mycolor4}{rgb}{1.0,0.65,0}%
\definecolor{mycolor5}{rgb}{0.98,0.63,0.89}%

\begin{tikzpicture}

\begin{groupplot}[
group style = {group name=my fancy plots,
        group size=1 by 2,
        xticklabels at=edge bottom,
        vertical sep=0pt
    }, width=4in,
    xmin=0, xmax=202
]

\nextgroupplot[ymin = 10^(5), ymax = 1e+40,
ytick = {10^10, 10^20, 10^30},
yminorticks=false,
ymode=log,
axis x line*=top,
xmajorticks =false,
x axis line style={draw=none},
axis y discontinuity = parallel,
axis y line=left,
height = 2 in,
title style = {font=\bfseries},
title = {Poorest Controller Performance: Initial Segments}]

\addplot [color=mycolor5, dashdotted, line width=1.5pt, mark=square, mark options={solid, mycolor5}]
  table[row sep=crcr]{%
2	1.00462825249782e+32\\
25	3.45057060841858e+19\\
50	1.15964025510405e+19\\
100	1.44101796882365e+15\\
200	363161490077.347\\
};

\nextgroupplot[ymin = 10^(0), ymax = 10^(2.5),
ymode=log,
ytick = {10^(0), 10^1, 10^2},
yminorticks=false,
height = 2 in,
xmin=0,
xmax=205,
xtick = {0, 25, 50, 75, 100, 200},
xlabel style={font=\color{white!15!black}},
xlabel={Number of Training Data Trajectories},
ylabel style={font=\color{white!15!black}, at={(axis description cs:0,1)}},
ylabel={$\log{\text{MSE}}$},
axis background/.style={fill=white},
axis x line=bottom,
axis y line*=left,
legend style={at={(0.68,.05)}, anchor=south west, legend cell align=left, align=left, draw=white!15!black}
]

\addplot [color=mycolor1, densely dotted, line width=1.5pt, mark=o, mark options={solid, mycolor1}]
  table[row sep=crcr]{%
2	27.2179600927491\\
25	26.1084495615442\\
50	25.5894707270926\\
100	26.8708707890281\\
200	27.5288872915486\\
};
\addlegendentry{ICO}

\addplot [color=mycolor3, densely dotted, line width=1.5pt, mark=triangle, mark options={solid, mycolor3}]
  table[row sep=crcr]{%
2	31.0439171967156\\
25	31.3299781644602\\
50	32.5383578486039\\
100	34.4982439398481\\
200	33.7126172692299\\
};
\addlegendentry{PGD}

\addlegendimage{mycolor5, dashdotted, mark = square, mark options = {solid, mycolor5}}
\addlegendentry{GD+NN}

\end{groupplot}
\end{tikzpicture}
        \caption{Controllers were trained using initial segments of trajectories.}
        \label{subfig:multTraj_linear_max}
    \end{subfigure}
    \hfill
     \begin{subfigure}[t]{0.45\textwidth}
        \centering
        \resizebox{\textwidth}{!}{
%
%
\definecolor{mycolor1}{rgb}{0.0,0.28,0.67}%
\definecolor{mycolor2}{rgb}{0.0,0.75,1.0}%
\definecolor{mycolor3}{rgb}{1.0,0.25,0}%
\definecolor{mycolor4}{rgb}{1.0,0.65,0}%
\definecolor{mycolor5}{rgb}{0.98,0.63,0.89}%
\begin{tikzpicture}

\begin{groupplot}[
group style = {group name=my fancy plots,
        group size=1 by 2,
        xticklabels at=edge bottom,
        vertical sep=0pt
    }, width=4in,
    xmin=0, xmax=202
]

\nextgroupplot[ymin = 10^(5), ymax = 1e+40,
ytick = {10^10, 10^20, 10^30},
yminorticks=false,
ymode=log,
axis x line*=top,
xmajorticks =false,
x axis line style={draw=none},
axis y discontinuity = parallel,
axis y line=left,
height = 2 in,
title style = {font=\bfseries},
title = {Poorest Controller Performance: Sliding Window}]

\addplot [color=mycolor5, dashdotted, line width=1.5pt, mark=square, mark options={solid, mycolor5}]
  table[row sep=crcr]{%
2	253246378084.729\\
25	2.49542532448491e+30\\
50	3.09287304252117e+20\\
100	6.70616855268168e+34\\
200	2.07802032468903e+23\\
};

\nextgroupplot[ymin = 10^(0), ymax = 10^(2.5),
ymode=log,
ytick = {10^(0), 10^1, 10^2},
yminorticks=false,
height = 2 in,
xmin=0,
xmax=202,
xtick = {0, 25, 50, 75, 100, 200},
xlabel style={font=\color{white!15!black}},
xlabel={Number of Training Data Trajectories},
ylabel style={font=\color{white!15!black}, at={(axis description cs:0,1)}},
ylabel={$\log{\text{MSE}}$},
axis background/.style={fill=white},
axis x line*=bottom,
axis y line*=left,
legend style={at={(0.68,.05)}, anchor=south west, legend cell align=left, align=left, draw=white!15!black}
]

\addplot [color=mycolor1, densely dotted, line width=1.5pt, mark=o, mark options={solid, mycolor1}]
  table[row sep=crcr]{%
2	25.8926705580996\\
25	27.8674095360142\\
50	30.1149936609428\\
100	29.669708991824\\
200	24.9518635681235\\
};
\addlegendentry{ICO}

\addplot [color=mycolor3, densely dotted, line width=1.5pt, mark=triangle, mark options={solid, mycolor3}]
  table[row sep=crcr]{%
2	23.3916223435961\\
25	23.1675165950881\\
50	22.8841349352065\\
100	22.7872032532922\\
200	21.6894088330167\\
};
\addlegendentry{PGD}

\addlegendimage{mycolor5, dashdotted, mark = square, mark options = {solid, mycolor5}}
\addlegendentry{GD+NN}

\end{groupplot}
\end{tikzpicture}
        \caption{Controllers were trained using a sliding window of trajectories.}
        \label{subfig:segments_linear_max}
    \end{subfigure}
    \hspace*{\fill}
    \caption{The highest \ac{mse} of plant states when controlled with learned controllers compared to the expert controller.}
    \label{fig:linear_max}
\end{figure}

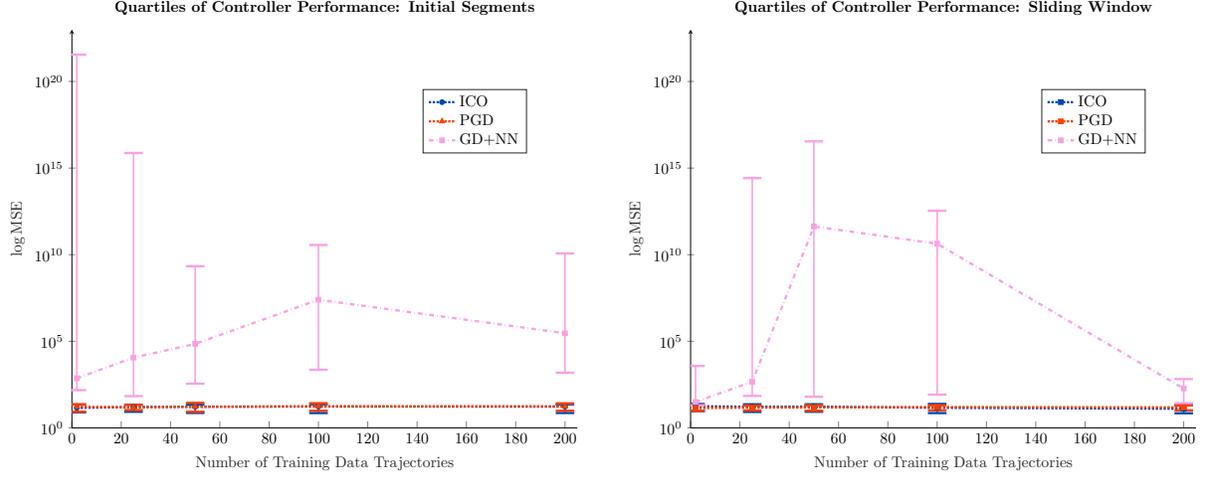
\begin{figure}[htp]
    \centering
    \hspace*{\fill}
    \begin{subfigure}[t]{0.45\textwidth}
        \centering
        \resizebox{\textwidth}{!}{
%

\definecolor{mycolor1}{rgb}{0.0,0.28,0.67}%
\definecolor{mycolor2}{rgb}{0.0,0.75,1.0}%
\definecolor{mycolor3}{rgb}{1.0,0.25,0}%
\definecolor{mycolor4}{rgb}{1.0,0.65,0}%
\definecolor{mycolor5}{rgb}{0.98,0.63,0.89}%

\begin{tikzpicture}

\begin{axis}[%
width=4.521in,
height=3.566in,
at={(0.758in,0.481in)},
scale only axis,
xmin=0,
xmax=205,
xlabel style={font=\color{white!15!black}},
xlabel={Number of Training Data Trajectories},
ymode=log,
ymin=1e0,
ymax=1e+23,
ytick = {10^(-5), 10^0, 10e+4, 10e+9, 10e+14, 10e+19, 10e+24, 10e+29, 10e+34},
ylabel style={font=\color{white!15!black}},
ylabel={$\log{\text{MSE}}$},
axis background/.style={fill=white},
title style={font=\bfseries},
title={Quartiles of Controller Performance: Initial Segments},
axis x line*=bottom,
axis y line=left,
legend style={at={(0.9,0.85)}, legend cell align=left, align=left, draw=white!15!black}
]
\addplot [color=mycolor1, densely dotted, line width=1.5pt, mark size=1.1pt, mark=*, mark options={solid, fill=mycolor1, draw=mycolor1}]
 plot [error bars/.cd, y dir=both, y explicit, error bar style={solid, line width=1pt}, error mark options={line width=1.5pt, mark size=6.0pt, rotate=90}]
 table[row sep=crcr, y error plus index=2, y error minus index=3]{%
2	14.2826809009194	8.24882831268859	6.35707112655812\\
25	16.0974537848286	5.90673755494369	7.60575794529562\\
50	16.6806040486284	5.51058228834527	9.39698788273399\\
100	17.0289046009463	5.84378279326491	9.83439884170587\\
200	16.8581199184947	5.97249636675732	9.62164002229526\\
};
\addlegendentry{ICO}

\addplot [color=mycolor3, densely dotted, line width=1.5pt, mark size=1.1pt, mark=triangle*, mark options={solid, fill=mycolor3, draw=mycolor3}]
 plot [error bars/.cd, y dir=both, y explicit, error bar style={solid, line width=1pt}, error mark options={line width=1.5pt, mark size=6.0pt, rotate=90}]
 table[row sep=crcr, y error plus index=2, y error minus index=3]{%
2	16.832466001678	5.88606059060557	8.29925401695874\\
25	15.0915252522591	6.58196471459883	4.37855217250159\\
50	16.6026169288553	11.3022023292933	8.0960131564694\\
100	17.724428898466	8.91385001349142	7.99077746923921\\
200	17.5290888183536	8.58445330465925	7.71783402846921\\
};
\addlegendentry{PGD}

\addplot [color=mycolor5, dashdotted, line width=1.5pt, mark size=1.1pt, mark=square*, mark options={solid, fill=mycolor5, draw=mycolor5}]
 plot [error bars/.cd, y dir=both, y explicit, error bar style={solid, line width=1pt}, error mark options={line width=1.5pt, mark size=6.0pt, rotate=90}]
 table[row sep=crcr, y error plus index=2, y error minus index=3]{%
2	727.870233808639	3.51297896093951e+21	577.090817558185\\
25	11453.2062563813	7.3873485174557e+15	11384.9385724907\\
50	70493.7568508761	2161798944.4558	70137.3403978309\\
100	25133729.7308617	36311102663.3148	25131491.265473\\
200	290780.066714642	11790194405.0249	289246.732933428\\
};
\addlegendentry{GD+NN}

\end{axis}
\end{tikzpicture}
        \caption{Controllers were trained using initial segments of trajectories.}
        \label{subfig:multTraj_linear_quartiles}
    \end{subfigure}
    \hfill
     \begin{subfigure}[t]{0.45\textwidth}
        \centering
        \resizebox{\textwidth}{!}{
%
%
\definecolor{mycolor1}{rgb}{0.0,0.28,0.67}%
\definecolor{mycolor2}{rgb}{0.0,0.75,1.0}%
\definecolor{mycolor3}{rgb}{1.0,0.25,0}%
\definecolor{mycolor4}{rgb}{1.0,0.65,0}%
\definecolor{mycolor5}{rgb}{0.98,0.63,0.89}%

\begin{tikzpicture}

\begin{axis}[%
width=4.521in,
height=3.566in,
at={(0.758in,0.481in)},
scale only axis,
xmin=0,
xmax=205,
xlabel style={font=\color{white!15!black}},
xlabel={Number of Training Data Trajectories},
ymode=log,
ymin=1e0,
ymax=1e+23,
ytick = {10^(-5), 10^0, 10e+4, 10e+9, 10e+14, 10e+19, 10e+24, 10e+29, 10e+34},
ylabel style={font=\color{white!15!black}},
ylabel={$\log{\text{MSE}}$},
axis background/.style={fill=white},
title style={font=\bfseries},
title={Quartiles of Controller Performance: Sliding Window},
axis x line*=bottom,
axis y line=left,
legend style={at={(0.9,0.85)}, legend cell align=left, align=left, draw=white!15!black}
]

\addplot [color=mycolor1, densely dotted, line width=1.5pt, mark size=1.1pt, mark=square*, mark options={solid, fill=mycolor1, draw=mycolor1}]
 plot [error bars/.cd, y dir=both, y explicit, error bar style={solid, line width=1pt}, error mark options={line width=1.5pt, mark size=6.0pt, rotate=90}]
 table[row sep=crcr, y error plus index=2, y error minus index=3]{%
2	17.8122750235148	6.09975514575258	8.54366353197334\\
25	16.578936845094	6.49165287004141	8.23381772802889\\
50	16.9268634541785	6.14455014711496	8.39419099929368\\
100	14.4353289394761	9.72696253171427	7.28152224980257\\
200	12.7234383929692	9.27667472206294	5.73711778481236\\
};
\addlegendentry{ICO}

\addplot [color=mycolor3, densely dotted, line width=1.5pt, mark size=1.1pt, mark=square*, mark options={solid, fill=mycolor3, draw=mycolor3}]
 plot [error bars/.cd, y dir=both, y explicit, error bar style={solid, line width=1pt}, error mark options={line width=1.5pt, mark size=6.0pt, rotate=90}]
 table[row sep=crcr, y error plus index=2, y error minus index=3]{%
2	13.8103188619606	7.36715942438213	4.15428751060843\\
25	14.8445024831058	4.84803666166435	4.67205745495011\\
50	14.9805982708387	5.40966814501102	5.22456093505358\\
100	15.9695422986655	4.0798701607842	5.57026095308491\\
200	15.3658307722247	4.44857868825246	5.04349456470738\\
};
\addlegendentry{PGD}

\addplot [color=mycolor5, dashdotted, line width=1.5pt, mark size=1.1pt, mark=square*, mark options={solid, fill=mycolor5, draw=mycolor5}]
 plot [error bars, y dir=both, y explicit, error bar style={solid, line width=1pt}, error mark options={line width=1.5pt, mark size=6.0pt, rotate=90}]
 table[row sep=crcr, y error plus index=2, y error minus index=3]{
2	31.2495784731958	3747.7888380907	11.2456920372142\\
25	463.366003633879	262991978854501	393.347917055966\\
50	425254964272.48	3.46977130948594e+16 425254964209.79\\
100	43074766503.5901	3384691203222.38   43074766420.6059\\
200	186.216563786309	475.518042635575   159.360190775103\\
};
\addlegendentry{GD+NN}
\addplot+[color = mycolor5, mark=-, line width=1.5pt, mark size=6.0pt] coordinates {(50,62.6900024414)};
\draw[color = mycolor5, line width=1pt] (axis cs:50,425254964272.48) -- node[left]{} (axis cs:50,62.6900024414);

\addplot+[color = mycolor5, mark=-, line width=1.5pt, mark size=6.0pt] coordinates {(100,82.9841995239)};
\draw[color = mycolor5, line width=1pt] (axis cs:100,	43074766503.5901) -- node[left]{} (axis cs:100,82.9841995239);

\end{axis}
\end{tikzpicture}
        \caption{Controllers were trained using a sliding window of trajectories.}
        \label{subfig:segments_linear_quartiles}
    \end{subfigure}
    \hspace*{\fill}
    \caption{Quartiles of the \ac{mse} of plant states when controlled with learned controllers compared to the expert controller.}
    \label{fig:linear_quartiles}
\end{figure}

\pagebreak
\subsection{Control of a Nonlinear System}

The imitation learning scheme was also tested on a QSR-dissipative discrete, nonlinear plant controlled by an expert \ac{nmpc}. The plant is described as
\begin{equation}\label{eq:nonlinSys}
    \Gcal_p : \begin{cases}
    \x_{k+1} = \A \x_k + \alpha\sin(\x_k) + \B \bu_k \\
    \y_k = \C \x_k  + \D \bu_k,
    \end{cases}
\end{equation}
where $x \in \mathbb{R}$, $u \in \mathbb{R}$, $y \in \mathbb{R}$, $(\A,\B,\C,\D)$ is defined as $(0.5, 1, 1, -2),$ and $\alpha$ is $0.4$. The \ac{nmpc} controls the system to equilibrium through minimizing the cost function $\sum_{i=1}^{N-1} \x_i^T\Q\x_i + \bu_i^T\R\bu_i + \x_N^T\Q\x_N^T,$ with $N = 11,$ $\Q = 2$, and $\R = 1.$ Training data trajectories were generated by simulating the expert controlling the system to equilibrium for randomized initial conditions, drawn from the distribution $N(0, 5^2)$. The initial state, as well as plant output, controller input, and plant states were recorded for each trajectory. Plant noise of  $N(0,0.04^2)$ and input noise of $N(0,0.02^2)$ were added in simulation.

The QSR-dissipativity of the nonlinear plant, $\Gcal_p$, was determined using the following \ac{lmi}.
\begin{lemma}
The dynamical system, $\Gcal_p$, is QSR dissipative if there exists $\bP  \succ 0$ such that
    \begin{equation}\label{eq:nonlinQSR}
        \bmat{\C^T\Q\C - 2\A\bP\A + \bP - 2\alpha^2\bP &  \C^T\Q\D +\C^T\bS - 2\A^T\bP\B \\ \D^T\Q\C + \bS^T\C-2\B^T\bP\A  & \D^T\Q\D + \bS^T\D + \D^T\bS + \R - 2\B^T\bP\B}  \succeq 0.
    \end{equation}

\begin{proof}

    From Definition \ref{def:qsr_def}, $\Gcal_p$ is QSR dissipative if the supply rate $w(\bu,\y) = \langle \y,\Q\y \rangle + \langle \y, \bS\bu \rangle + \langle \bu, \R\bu\rangle \geq \beta,$ where $\beta$ depends on initial conditions. Define the candidate Lyapunov function $V = \x_k^T \bP \x_k$ and its difference operator
    \begin{equation}
         \Delta V_k = \x_{k+1}^T\bP\x_{k+1} - \x_k^T\bP\x_k = (\A\x_k + \alpha\sin( \x_k) + \B\bu_k)^T\bP(\A\x_k + \alpha\sin(\x_k) + \B\bu_k) - \x_k^T\bP\x_k,
    \end{equation}
    where $\bP$ is a positive definite matrix. Using a telescoping summation, it can be shown that 
    \begin{equation}
        V(\x_T) - V(\x_0) = \sum_{i=0}^T\Delta V_i.
    \end{equation}
    
    S-procedure is used to find the QSR characteristics of the system. Define $\beta = V(\x_0),$ $S_T = \sum_{i=0}^T\Delta V_i + V(\x_0),$ and $L_T = \sum_{i=0}^T \y_i^T\Q\y_i + 2\y_i^T\bS\bu_i + \bu_i\R\bu_i.$ Note that because $S_T$ is positive definite, if $L_T - S_T > \beta,$ then $L_T > \beta.$ The inequality,  $L_T - S_T > \beta,$ becomes
    \begin{equation}\label{eq:s_pro}
        \sum_{i=0}^T \y_i^T\Q\y_i + 2\y_i^T\bS\bu_i + \bu_i\R\bu_i - \sum_{i=0}^T\Delta V_i \geq 0,
    \end{equation}
    which can also be written as
    \begin{multline}
     \sum_{k = 0}^{T-1} \bmat{\x_k \\ \bu_k}^T\bmat{\C^T\Q\C - \A^T\bP\A - \bP & \C^T\Q\D + \C^T\bS - \A^T\bP\B \\ \D^T\Q\C + \bS^T\C - \B^T\bP\A & \D^T\Q\D + \bS^T\D + \D^T\bS + \R -\B^T\bP\B}\bmat{\x_k \\ \bu_k} \\ - \alpha \sin(\x_k)^T\bP (\alpha \sin\x_k) - (\A\x_k +\B\bu_k)^T\bP \alpha \sin(\x_k) - \alpha\sin^T(\x_k)\bP(\A\x_k + \B\bu_k)  \geq 0.
    \end{multline}
    
    To overbound the $\alpha\sin( \x_k)^T\bP \alpha\sin(\alpha \x_k)$, note that $\alpha\sin( \x_k)^T\bP \alpha\sin( \x_k) \leq \alpha^2\x_k^T\x_k$ and therefore 
    
    \noindent$-\alpha\sin( \x_k)^T\bP \alpha\sin( \x_k) \geq -\alpha^2\x_k^T\bP \x_k.$ Furthermore, through completion of the square 
    \[ - \alpha \sin(\x_k)^T\bP \alpha \sin(\x_k) - (\A\x_k +\B\bu_k)^T\bP \alpha \sin(\x_k) - \alpha\sin^T(\x_k)\bP(\A\x_k + \B\bu_k)  \geq \]\[-\alpha^2\x_k^T\x_k - (\A\x_k + \B\bu_k)^T\bP(\A\x_k + \B\bu_k)\]
    Using these two overbounds, \autoref{eq:s_pro} is reformulated as the \ac{lmi} \autoref{eq:nonlinQSR}.
\end{proof}
\end{lemma}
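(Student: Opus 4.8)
The plan is to use the classical Lyapunov/storage-function characterization of dissipativity, with the only new ingredient being a pointwise overbound of the $\alpha\sin(\x_k)$ term. I would take $V(\x)=\x^{\trans}\bP\x$ as the candidate storage function, $\bP\succ0$ being the matrix guaranteed by the hypothesis, and aim to establish the pointwise dissipation inequality
\begin{equation*}
 \Delta V_k := V(\x_{k+1})-V(\x_k) \le \y_k^{\trans}\Q\y_k + 2\y_k^{\trans}\bS\bu_k + \bu_k^{\trans}\R\bu_k ,
\end{equation*}
for every index $k$ and all admissible $(\x_k,\bu_k)$. Granting that, summing over $k=0,\dots,T-1$ gives $w(\bu,\Gcal_p\bu)=\sum_k\big(\y_k^{\trans}\Q\y_k+2\y_k^{\trans}\bS\bu_k+\bu_k^{\trans}\R\bu_k\big)\ge\sum_k\Delta V_k=V(\x_T)-V(\x_0)\ge-V(\x_0)$, so $\Gcal_p$ is $(\Q,\bS,\R)$-dissipative in the sense of \autoref{def:qsr_def} with $\beta:=-V(\x_0)$, which depends only on the initial condition.

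First I would substitute the dynamics $\x_{k+1}=\A\x_k+\alpha\sin(\x_k)+\B\bu_k$ into $\Delta V_k$ and expand. Abbreviating $a_k:=\A\x_k+\B\bu_k$ and $s_k:=\alpha\sin(\x_k)$, this produces the purely linear quadratic form $a_k^{\trans}\bP a_k-\x_k^{\trans}\bP\x_k$ in $(\x_k,\bu_k)$ plus two residuals involving the sine, namely $2a_k^{\trans}\bP s_k$ and $s_k^{\trans}\bP s_k$. Writing the supply rate as a quadratic form in $(\x_k,\bu_k)$ as well and subtracting, the target reduces to showing that a certain quadratic form in $(\x_k,\bu_k)$ minus $2a_k^{\trans}\bP s_k+s_k^{\trans}\bP s_k$ is nonnegative for every value $\sin(\x_k)$ can take.

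The crux — and the only step I expect to require real care — is eliminating the sine so that what remains is a genuine quadratic form in $(\x_k,\bu_k)$. I would dispatch the cross term with Young's inequality, $-2a_k^{\trans}\bP s_k\ge-a_k^{\trans}\bP a_k-s_k^{\trans}\bP s_k$, and then control the sine-quadratic by $s_k^{\trans}\bP s_k=\alpha^2\sin(\x_k)^{\trans}\bP\sin(\x_k)\le\alpha^2\x_k^{\trans}\bP\x_k$; the latter is immediate here because $\x\in\IR$ makes $\bP$ a positive scalar and $|\sin x|\le|x|$. Together these replace the sine residuals by the state-quadratic lower bound $-a_k^{\trans}\bP a_k-2\alpha^2\x_k^{\trans}\bP\x_k$, so the dissipation inequality becomes positive semidefiniteness of the coefficient matrix of a quadratic form in $(\x_k,\bu_k)$. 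The block arithmetic — combining the linear part of $\Delta V_k$ with the twice-extracted $a_k^{\trans}\bP a_k$ (one copy from $\Delta V_k$, one from the Young step) and the $2\alpha^2\bP$ from the sine overbound — then reproduces exactly the matrix in \autoref{eq:nonlinQSR}, so its positive semidefiniteness (the hypothesis) closes the proof. Everything else, in particular the factor-of-two bookkeeping on the $\bS$ cross term and verifying each block entry, is routine.
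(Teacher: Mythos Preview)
Your proposal is correct and follows essentially the same route as the paper: a quadratic storage function $V(\x)=\x^{\trans}\bP\x$, telescoping to reduce dissipativity to a pointwise inequality, and then eliminating the sine via the completion-of-square/Young bound on the cross term together with $|\sin x|\le|x|$ (valid here because the state is scalar). Your bookkeeping is in fact cleaner than the paper's --- you correctly set $\beta=-V(\x_0)$ and keep track of the factor $2$ on $s_k^{\trans}\bP s_k$ that produces the $-2\alpha^2\bP$ and $-2\A^{\trans}\bP\A$ entries, whereas the paper's intermediate displays contain a few sign and factor slips before arriving at the same LMI.
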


QSR-dissipativity of $\Gcal_p$ was found to hold for $\Q_p = -1$, $\bS_p = 3.556$, and $\R_p = 29.333,$ where $\Q,$ $\bS$, and $\R$ were found through minimizing the conic radius of the plant \cite{joshi2002design}. Using \autoref{thm:QSR} with $\alpha = 1$, a stabilizing controller is required to be QSR-dissipative with respect to $\Q_c= -29.867$, $\bS_c= 3.556$, and $\R_c=0.601$. The learned controller was trained using expert \ac{nmpc} trajectories, while dissipativity with respect to $\Q_c\bS_c\R_c$ was enforced through \cref{eqn:convex_constraint}. 

A dynamic output feedback controller was learned using \ac{ico} with and without dissipativity constraints, using \ac{pgd} with dissipativity constraints, and \ac{gd} without any dissipativity constraints. In each instance, the dynamic output feedback controller had the same number of states, inputs, and outputs as the plant. A \ac{nn} controller was also trained using \ac{gd} as a point of comparison for the learned controllers. The \ac{nn} was a three layer network, with 25 hidden units per layer and \ac{relu} activation functions. These hyperparameters were selected because they resulted in generally good controller performance. Note that the neural network requires learning seventy-five network weights $(\mathbf{W}_{1},\mathbf{W}_{2},\mathbf{W}_{3})$, as opposed to the dynamic output feedback controller's four system parameters, $(\Ahat, \Bhat, \Chat, \Dhat).$

Each learned controller was trained using 100 randomized parameter initializations, $(\Ahat_0, \Bhat_0, \Chat_0, \Dhat_0)$ (or for the \ac{nn} controller - $(\mathbf{W}_{0,1},\mathbf{W}_{0,2},\mathbf{W}_{0,3})$). For each controller, training was terminated when the difference in cost between two iterations was less than $|1e-5|.$  All learned controllers were trained using either only initial segments of data, representing a scenario in which little is known about the plant state, and a sliding window of trajectory segments on a small number of training data trajectories. Because the \ac{nmpc} controlled the system to equilibrium quickly, only the first ten steps of the training data trajectories were used for the sliding window approach. The learned controllers were then evaluated by comparing their performance to that of the expert when controlling the plant to equilibrium for one thousand new initial conditions, drawn from the distribution $N(0, 15^2)$. The \ac{mse} between the plant states when controlled by the expert, $\x_e$, and the plant states when controlled by the learned controller, $\x$ was used the characterize the learned controller's ability to imitate the expert. 

\autoref{tab:nonlin_multTraj} and \autoref{tab:nonlin_Segments} show the percentage of learned controllers that remain stable when in closed loop with $\Gcal_p$, as well as the average percent change in cost function during training and the average total amount of training time needed for each controller. All of the \ac{ico} and \ac{pgd} trained controllers remained stable while controlling $\Gcal_p,$ as expected. Surprisingly, 100\% of the \ac{nn} controllers remained stable as well, but without any theoretic stability guarantees. While \ac{ico}, by far, had the longest training time, it also had the best performance, even with low amounts of data.

The median \ac{mse} of the learned controllers is displayed in \autoref{fig:nonlinear_median}, while the  poorest controller performance and the quartiles of \ac{mse} are shown in \autoref{fig:nonlinear_max} and \autoref{fig:nonlinear_quartiles}. Learned dynamic output feedback controllers trained with \ac{ico}-NC or \ac{gd} (without dissipativity constraints) performed incredibly poorly, often becoming unstable. While the \ac{nn} controller performed worse than \ac{ico} or \ac{pgd} trained dynamic output feedback controllers, it did remain stable throughout testing. Of the controllers that remained stable throughout testing, the \ac{pgd} learned controller had the greatest variation in performance. Overall, the \ac{ico} learned and dissipativity constrained controller performed best -- outperforming the other methods even when it was trained with little data.

\begin{table}[htp]
   \centering
\begin{tabular}{|p{2.25cm}|| p{0.5cm}|p{0.5cm}|p{0.5cm}|p{0.5cm}||p{0.92cm}|p{0.92cm}|p{0.92cm}|p{0.92cm}||p{0.7cm}|p{0.7cm}|p{0.7cm}|p{0.7cm}|| }
 \hline
 \multicolumn{13}{|c|}{Learned Controller - Initial Segments} \\
 \hline
 \multirow{3}{*}{Training} & \multicolumn{4}{|c|}{Percent Stable} & \multicolumn{4}{|c|}{Percent Change in Cost} & \multicolumn{4}{|c|}{Training Time (s)}\\
 \cline{2-13}
   & \multicolumn{12}{|c|}{Number of Segments} \\
 \cline{2-13}
  Method & 2 & 25 & 50 & 100 & 2 & 25 & 50 & 100 & 2 & 25 & 50 & 100 \\
  \hline
  ICO & 100 & 100 & 100 & 100 & -88.14 & -96.31 & -97.09 & -96.90 & 1.44 & 8.52 & 16.93 & 33.02 \\
  \hline
  ICO-NC & 72 & 63 & 63 & 63 & -98.54 & -99.37 & -99.53 & -99.46  & 3.06 & 12.74 & 25.44 & 49.81\\
  \hline
  PGD & 100 & 100 & 100 & 100 & -96.90 & -97.37 & -97.29 & -97.36 & 11.45 & 12.22 & 12.08 & 12.27  \\
  \hline
  GD& 32 & 32 & 32 &  32 & -28.85 & -28.51 &  -28.61 & -28.54 & 2.50 & 2.58 & 2.62 & 2.67  \\
  \hline
  GD+NN & 100 & 100 & 100 & 100 & -99.59 & -98.76 & -98.92 & -98.85 & 0.10 & 0.08 & 0.10 & 0.12 \\
  \hline

\end{tabular}

\caption{The percentage of controllers that remain stable when controlling from test initial conditions is recorded, as well as the average percent change in cost for each controller during training and the average training time. Controllers were trained with the initial segments of the training data trajectories.}
\label{tab:nonlin_multTraj}
\end{table}

\begin{table}[htp]
\centering
\begin{tabular}{|p{2.25cm}|| p{0.5cm}|p{0.5cm}|p{0.5cm}|p{0.5cm}||p{0.92cm}|p{0.92cm}|p{0.92cm}|p{0.92cm}||p{0.7cm}|p{0.7cm}|p{0.7cm}|p{0.7cm}|| }
 \hline
 \multicolumn{13}{|c|}{Learned Controller - Sliding Window} \\
 \hline
 \multirow{3}{*}{Training} & \multicolumn{4}{|c|}{Percent Stable} & \multicolumn{4}{|c|}{Percent Change in Cost} & \multicolumn{4}{|c|}{Training Time (s)}\\
 \cline{2-13}
   & \multicolumn{12}{|c|}{Number of Segments} \\
 \cline{2-13}
 Method & 2 & 25 & 50 & 100 & 2 & 25 & 50 & 100 & 2 & 25 & 50 & 100 \\
  \hline
  ICO & 100 & 100 & 100 & 100 & -97.75 & -94.98 & -95.06 & -95.36 & 13.93 & 106.71 & 43.13 & 81.02 \\
  \hline
  ICO-NC & 87 & 75 & 67 & 64 & -99.63 & -99.39 & -99.46 & -99.47 &18.60&135.51&180.25&330.79\\
  \hline
  PGD & 100 & 100 & 100 & 100 & -97.47 & -97.17 & -97.34 & -97.32 & 12.04 & 11.30 & 11.25 & 11.30 \\
  \hline
  GD & 32 & 32 & 32 & 32 & -28.67 & -27.61 & -28.03 & -27.78 & 2.39 & 2.25 & 2.37 & 2.38  \\
  \hline
  GD+NN & 100 & 100 & 100 & 100 & -98.78 & -99.43 & -99.14 & -98.98 & 0.09 & 0.09 & 0.10 & 0.12 \\
  \hline

\end{tabular}

\caption{The percentage of controllers that remain stable when controlling from test initial conditions is recorded, as well as the average percent change in cost for each controller during training and the average training time. Controllers were trained with a sliding window of segments from 0 to 20 training data trajectories.}
\label{tab:nonlin_Segments}
\end{table}

\section{Conclusion}
Covariate shift between the training and test data sets can cause catastrophic consequences for imitation learned controllers. In this paper, \ac{io} stability theory was used to develop and impose constraints on a dynamic output feedback controller during imitation learning to ensure closed loop stability with a given plant. This methodology guarantees stability even with sparse expert training data sets that likely do not comprehensively capture controller behavior. Moreover, the controller parameters were learned directly from \ac{io} data with a known initial state.

Two methods for learning the controller were explored -- \ac{ico} and \ac{pgd}. While both training methods synthesized controllers that performed well when controlling a linear system, \ac{ico} synthesized controllers with much better performance on nonlinear systems. Unfortunately, the computational burden and burden of implementation for \ac{ico} is much higher than that of \ac{pgd}. However, the development of toolboxes\cite{zhu2022development} may provide the opportunity for more accessible implementation of \ac{ico}.

\section*{Acknowledgments}
This material is based upon work supported by NSF GFRP Grant No. 1644868, the Alfred P. Sloan Foundation, and ONR YIP Grant No. N00014-23-1-2043.

\begin{figure}
    \centering
     \hspace*{\fill}
    \begin{subfigure}[t]{0.45\textwidth}
        \centering
        \resizebox{\textwidth}{!}{
%
%

\definecolor{mycolor1}{rgb}{0.0,0.28,0.67}%
\definecolor{mycolor2}{rgb}{0.0,0.75,1.0}%
\definecolor{mycolor3}{rgb}{1.0,0.25,0}%
\definecolor{mycolor4}{rgb}{1.0,0.65,0}%
\definecolor{mycolor5}{rgb}{0.98,0.63,0.89}%
\begin{tikzpicture}

\begin{axis}[%
width=4.521in,
height=3.566in,
at={(0.758in,0.481in)},
scale only axis,
xmin=0,
xmax=102,
xlabel style={font=\color{white!15!black}},
xlabel={Number of Training Data Trajectories},
ymode=log,
ymin=1e-05,
ymax=1e+15,
yminorticks=true,
ylabel style={font=\color{white!15!black}},
ylabel={$\log{\text{MSE}}$},
axis background/.style={fill=white},
title style={font=\bfseries},
title={Median Controller Performance: Initial Segments},
axis x line*=bottom,
axis y line=left,
legend style={at={(0.701,0.509)}, anchor=south west, legend cell align=left, align=left, draw=white!15!black}
]
\addplot [color=mycolor1, densely dotted, line width=1.5pt, mark=o, mark options={solid, mycolor1}]
  table[row sep=crcr]{%
2	0.02482942873887\\
25	0.0101296107251832\\
50	0.0120047839111713\\
100	0.0104936458580765\\
};
\addlegendentry{ICO}

\addplot [color=mycolor2, dashdotted, line width=1.5pt, mark=o, mark options={solid, mycolor2}]
  table[row sep=crcr]{%
2	0.397357413854803\\
25	3.10347704347309\\
50	3.0378739478043\\
100	3.09048792345886\\
};
\addlegendentry{ICO - NC}

\addplot [color=mycolor3, densely dotted, line width=1.5pt, mark=triangle, mark options={solid, mycolor3}]
  table[row sep=crcr]{%
2	0.12014326755259\\
25	0.110915623658115\\
50	0.104035605159509\\
100	0.11271966616553\\
};
\addlegendentry{PGD}

\addplot [color=mycolor4, dashdotted, line width=1.5pt, mark=triangle, mark options={solid, mycolor4}]
  table[row sep=crcr]{%
2	21458657316022.8\\
25	19978432548299\\
50	20665714428644.7\\
100	20164742403639.4\\
};
\addlegendentry{GD}

\addplot [color=mycolor5, dashdotted, line width=1.5pt, mark=square, mark options={solid, mycolor5}]
  table[row sep=crcr]{%
2	0.423894768802815\\
25	0.443673898622639\\
50	0.458692470163108\\
100	0.450047110616831\\
};
\addlegendentry{GD+NN}

\end{axis}

\begin{axis}[%
width=5.833in,
height=4.375in,
at={(0in,0in)},
scale only axis,
xmin=0,
xmax=1,
ymin=0,
ymax=1,
axis line style={draw=none},
ticks=none,
axis x line*=bottom,
axis y line*=left
]
\end{axis}
\end{tikzpicture}
        \caption{Controllers were trained using initial segments of trajectories.}
        \label{subfig:multTraj_nonlinear_median}
    \end{subfigure}
    \hfill
     \begin{subfigure}[t]{0.45\textwidth}
        \centering
        \resizebox{\textwidth}{!}{
%
%
\definecolor{mycolor1}{rgb}{0.0,0.28,0.67}%
\definecolor{mycolor2}{rgb}{0.0,0.75,1.0}%
\definecolor{mycolor3}{rgb}{1.0,0.25,0}%
\definecolor{mycolor4}{rgb}{1.0,0.65,0}%
\definecolor{mycolor5}{rgb}{0.98,0.63,0.89}%
\begin{tikzpicture}

\begin{axis}[%
width=4.521in,
height=3.566in,
at={(0.758in,0.481in)},
scale only axis,
xmin=0,
xmax=102,
xlabel style={font=\color{white!15!black}},
xlabel={Number of Training Data Trajectories},
ymode=log,
ymin=1e-05,
ymax=1e+15,
yminorticks=true,
ylabel style={font=\color{white!15!black}},
ylabel={$\log{\text{MSE}}$},
axis background/.style={fill=white},
title style={font=\bfseries},
title={Median Controller Performance: Sliding Window},
axis x line*=bottom,
axis y line=left,
legend style={at={(0.711,0.497)}, anchor=south west, legend cell align=left, align=left, draw=white!15!black}
]
\addplot [color=mycolor1, densely dotted, line width=1.5pt, mark=o, mark options={solid, mycolor1}]
  table[row sep=crcr]{%
2	0.0176113723109893\\
25	0.0256291138841847\\
50	0.0186393929674663\\
100	0.0162123202688351\\
};
\addlegendentry{ICO}

\addplot [color=mycolor2, dashdotted, line width=1.5pt, mark=o, mark options={solid, mycolor2}]
  table[row sep=crcr]{%
2	0.0232249933993789\\
25	0.0454310650803463\\
50	1.49006257545996\\
100	2.38673089019393\\
};
\addlegendentry{ICO - NC}

\addplot [color=mycolor3, densely dotted, line width=1.5pt, mark=triangle, mark options={solid, mycolor3}]
  table[row sep=crcr]{%
2	0.135282505595676\\
25	0.144649975210067\\
50	0.132911116178995\\
100	0.144619849655426\\
};
\addlegendentry{PGD}

\addplot [color=mycolor4, dashdotted, line width=1.5pt, mark=triangle, mark options={solid, mycolor4}]
  table[row sep=crcr]{%
2	21061475429453.3\\
25	20473651515310.4\\
50	20793877413110.5\\
100	20477934600006.3\\
};
\addlegendentry{GD}

\addplot [color=mycolor5, dashdotted, line width=1.5pt, mark=square, mark options={solid, mycolor5}]
  table[row sep=crcr]{%
2	0.462749700817854\\
25	0.456047573957067\\
50	0.45980120445646\\
100	0.456414973649588\\
};
\addlegendentry{GD+NN}

\end{axis}

\begin{axis}[%
width=5.833in,
height=4.375in,
at={(0in,0in)},
scale only axis,
xmin=0,
xmax=1,
ymin=0,
ymax=1,
axis line style={draw=none},
ticks=none,
axis x line*=bottom,
axis y line*=left
]
\end{axis}
\end{tikzpicture}
        \caption{Controllers were trained using a sliding window of trajectories.}
        \label{subfig:segments_nonlinear_median}
    \end{subfigure}
    \hspace*{\fill}
    \caption{Median \ac{mse} performance of the learned controllers controlling a nonlinear when compared to the expert controller.}
    \label{fig:nonlinear_median}
\end{figure}
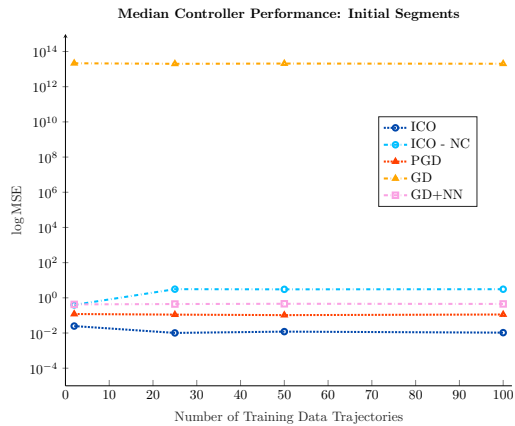
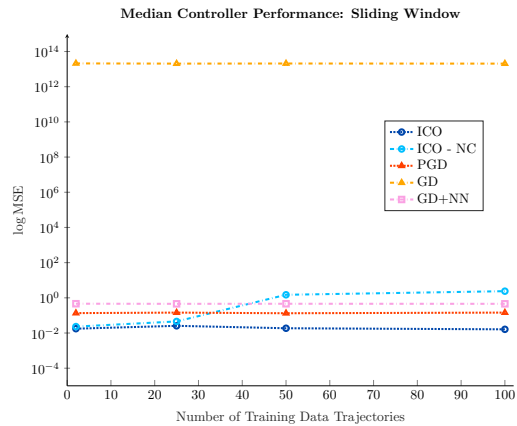

\begin{figure}
    \centering
    \hspace*{\fill}
    \begin{subfigure}[t]{0.45\textwidth}
        \centering
        \resizebox{\textwidth}{!}{
%
%
\definecolor{mycolor1}{rgb}{0.0,0.28,0.67}%
\definecolor{mycolor2}{rgb}{0.0,0.75,1.0}%
\definecolor{mycolor3}{rgb}{1.0,0.25,0}%
\definecolor{mycolor4}{rgb}{1.0,0.65,0}%
\definecolor{mycolor5}{rgb}{0.98,0.63,0.89}%
\begin{tikzpicture}

\begin{groupplot}[
group style = {group name=my fancy plots,
        group size=1 by 2,
        xticklabels at=edge bottom,
        vertical sep=0pt
    }, width=4in,
    xmin=0, xmax=102
]

\nextgroupplot[ymin = 1e+80, ymax = 1e+205,
ytick = {1e+100, 1e+150, 1e+200},
yminorticks=true,
ymode=log,
axis x line*=top,
xmajorticks =false,
x axis line style={draw=none},
axis y discontinuity = parallel,
axis y line=left,
height = 2 in,
title style = {font=\bfseries},
title = {Poorest Controller Performance: Initial Segments}]
\addplot [color=mycolor2, dashdotted, line width=1.5pt, mark=o, mark options={solid, mycolor2}]
  table[row sep=crcr]{%
2	6.42765178382836e+161\\
25	1.33627846984799e+141\\
50	1.14351703102895e+141\\
100	3.79869075142263e+141\\
};

\addplot [color=mycolor4, dashdotted, line width=1.5pt, mark=triangle, mark options={solid, mycolor4}]
  table[row sep=crcr]{%
2	3.03941299791684e+169\\
25	3.04141316038215e+169\\
50	3.04221359879817e+169\\
100	3.04261389810701e+169\\
};

\nextgroupplot[ymin = 10^(-1.75), ymax = 10^(1.1),
ymode=log,
ytick = {10^(-1.5), 10^(-1), 10^(-.5), 1e+0, 10^(0.5), 10^1},
yminorticks=true,
height = 2 in,
xmin=0,
xmax=102,
xtick = {0, 25, 50, 75, 100},
xlabel style={font=\color{white!15!black}},
xlabel={Number of Training Data Trajectories},
ylabel style={font=\color{white!15!black}, at={(axis description cs:-0.025,1)}},
ylabel={$\log{\text{MSE}}$},
axis background/.style={fill=white},
axis x line*=bottom,
axis y line*=left,
legend style={at={(0.65,0.78)}, anchor=south west, legend cell align=left, align=left, draw=white!15!black}
]

\addplot [color=mycolor1, densely dotted, line width=1.5pt, mark=o, mark options={solid, mycolor1}]
  table[row sep=crcr]{%
2	0.135732951253067\\
25	0.675334449216096\\
50	0.669314403627093\\
100	0.681468326066919\\
};
\addlegendentry{ICO}

\addlegendimage{mycolor2, dashdotted, mark = o, mark options = {solid, mycolor2}}
\addlegendentry{ICO - NC}

\addplot [color=mycolor3, densely dotted, line width=1.5pt, mark=triangle, mark options={solid, mycolor3}]
  table[row sep=crcr]{%
2	1.1388810065229\\
25	1.10991890578301\\
50	1.11349041873625\\
100	1.10991890736564\\
};
\addlegendentry{PGD}

\addlegendimage{mycolor4, dashdotted, mark = triangle, mark options = {solid, mycolor4}}
\addlegendentry{GD}

\addplot [color=mycolor5, dashdotted, line width=1.5pt, mark=square, mark options={solid, mycolor5}]
  table[row sep=crcr]{%
2	0.479356531473988\\
25	0.55901287416767\\
50	0.469902387276518\\
100	0.463758349541128\\
};
\addlegendentry{GD+NN}

\end{groupplot}

\end{tikzpicture}}
        \caption{Controllers were trained using initial segments of trajectories.}
        \label{subfig:multTraj_nonlinear_max}
    \end{subfigure}
    \hfill
     \begin{subfigure}[t]{0.45\textwidth}
        \centering
        \resizebox{\textwidth}{!}{
%
%

\definecolor{mycolor1}{rgb}{0.0,0.28,0.67}%
\definecolor{mycolor2}{rgb}{0.0,0.75,1.0}%
\definecolor{mycolor3}{rgb}{1.0,0.25,0}%
\definecolor{mycolor4}{rgb}{1.0,0.65,0}%
\definecolor{mycolor5}{rgb}{0.98,0.63,0.89}%
\begin{tikzpicture}

\begin{groupplot}[
group style = {group name=my fancy plots,
        group size=1 by 2,
        xticklabels at=edge bottom,
        vertical sep=0pt
    }, width=4in,
    xmin=0, xmax=102
]

\nextgroupplot[ymin = 1e+80, ymax = 1e+205,
ytick = {1e+100, 1e+150, 1e+200},
yminorticks=true,
ymode=log,
axis x line*=top,
xmajorticks =false,
x axis line style={draw=none},
axis y discontinuity = parallel,
axis y line=left,
height = 2 in,
title style = {font=\bfseries},
title = {Poorest Controller Performance: Sliding Window}]
\addplot [color=mycolor2, dashdotted, line width=1.5pt, mark=o, mark options={solid, mycolor2}]
  table[row sep=crcr]{%
2	1.23072692648245e+108\\
25	4.4696752797084e+154\\
50	5.18504161877824e+181\\
100	1.97971247571973e+147\\
};

\addplot [color=mycolor4, dashdotted, line width=1.5pt, mark=triangle, mark options={solid, mycolor4}]
  table[row sep=crcr]{%
2	3.03941299791684e+169\\
25	3.0315566427385e+169\\
50	3.03781382381257e+169\\
100	3.03781382381257e+169\\
};

\nextgroupplot[ymin = 10^(-1.75), ymax = 10^(1.1),
ymode=log,
ytick = {10^(-1.5), 10^(-1), 10^(-.5), 1e+0, 10^(0.5), 10^1},
yminorticks=true,
height = 2 in,
xmin=0,
xmax=102,
xtick = {0, 25, 50, 75, 100},
xlabel style={font=\color{white!15!black}},
xlabel={Number of Training Data Trajectories},
ylabel style={font=\color{white!15!black}, at={(axis description cs:-0.025,1)}},
ylabel={$\log{\text{MSE}}$},
axis background/.style={fill=white},
axis x line*=bottom,
axis y line*=left,
legend style={at={(0.65,0.875)}, anchor=south west, legend cell align=left, align=left, draw=white!15!black}
]

\addplot [color=mycolor1, densely dotted, line width=1.5pt, mark=o, mark options={solid, mycolor1}]
  table[row sep=crcr]{%
2	0.0288877209552402\\
25	0.063387350614706\\
50	0.244091415390301\\
100	0.311659307914939\\
};
\addlegendentry{ICO}

\addlegendimage{mycolor2, dashdotted, mark = o, mark options = {solid, mycolor2}}
\addlegendentry{ICO - NC}

\addplot [color=mycolor3, densely dotted, line width=1.5pt, mark=triangle, mark options={solid, mycolor3}]
  table[row sep=crcr]{%
2	4.19259044167802\\
25	4.19259044131716\\
50	4.19259130670499\\
100	4.17518974913209\\
};
\addlegendentry{PGD}

\addlegendimage{mycolor4, dashdotted, mark = triangle, mark options = {solid, mycolor4}}
\addlegendentry{GD}

\addplot [color=mycolor5, dashdotted, line width=1.5pt, mark=square, mark options={solid, mycolor5}]
  table[row sep=crcr]{%
2	0.576425409190423\\
25	0.496403598785122\\
50	0.514447889442841\\
100	0.489317724241665\\
};
\addlegendentry{GD+NN}

\end{groupplot}

\end{tikzpicture}}
        \caption{Controllers were trained using a sliding window of trajectories.}
        \label{subfig:segments_nonlinear_max}
    \end{subfigure}
    \hspace*{\fill}
    \caption{Poorest \ac{mse} performance of the learned controllers controlling a nonlinear when compared to the expert controller.}
    \label{fig:nonlinear_max}
\end{figure}

\begin{figure}
    \centering
     \hspace*{\fill}
    \begin{subfigure}[t]{0.4\textwidth}
        \centering
        \resizebox{\textwidth}{!}{
%
\definecolor{mycolor1}{rgb}{0.0,0.28,0.67}%
\definecolor{mycolor2}{rgb}{0.0,0.75,1.0}%
\definecolor{mycolor3}{rgb}{1.0,0.25,0}%
\definecolor{mycolor4}{rgb}{1.0,0.65,0}%
\definecolor{mycolor5}{rgb}{0.98,0.63,0.89}%
\begin{tikzpicture}

\begin{axis}[%
width=4.521in,
height=3.566in,
at={(0.758in,0.481in)},
scale only axis,
xmin=0,
xmax=102,
xlabel style={font=\color{white!15!black}},
xlabel={Number of Training Data Trajectories},
ymode=log,
ymin=1e-5,
ymax=1e+40,
ytick = {10^(-5), 10^0, 10e+4, 10e+9, 10e+14, 10e+19, 10e+24, 10e+29, 10e+34},
ylabel style={font=\color{white!15!black}},
ylabel={$\log{\text{MSE}}$},
axis background/.style={fill=white},
title style={font=\bfseries},
title={Quartiles of Controller Performance: Initial Segments},
axis x line*=bottom,
axis y line=left,
legend style={at={(0.9,0.85)}, legend cell align=left, align=left, draw=white!15!black}
]

\addplot [color=mycolor1, densely dotted, line width=1.5pt, mark size=1.5pt, mark=*, mark options={solid, fill=mycolor1, draw=mycolor1}]
 plot [error bars/.cd, y dir=both, y explicit, error bar style={solid, line width=1pt}, error mark options={line width=1.5pt, mark size=6.0pt, rotate=90}]
 table[row sep=crcr, y error plus index=2, y error minus index=3]{%
2	0.02482942873887	0.00487144952367508	0.00285100473245775\\
25	0.0101296107251832	0.00641106964290917	0.0034911614480575\\
50	0.0120047839111713	0.00525107177145509	0.00358692172478532\\
100	0.0104936458580765	0.0064262417691485	0.00340450943066058\\
};
\addlegendentry{ICO}

\addplot [color=mycolor2, dashdotted, line width=1.5pt, mark size=1.5pt, mark=*, mark options={solid, fill=mycolor2, draw=mycolor2}]
 plot [error bars/.cd, y dir=both, y explicit, error bar style={solid, line width=1pt}, error mark options={line width=1.5pt, mark size=6.0pt, rotate=90}]
 table[row sep=crcr, y error plus index=2, y error minus index=3]{%
2	0.397357413854803	24830.437671401	0.364109917537241\\
25	3.10347704347309	13546132984071.8	3.08469770145104\\
50	3.0378739478043	22267952444410.8	3.01805542599625\\
100	3.09048792345886	15050606071653.6	3.07061837511872\\
};
\addlegendentry{ICO - NC}

\addplot [color=mycolor3, densely dotted, line width=1.5pt, mark size=1.5pt, mark=*, mark options={solid, fill=mycolor3, draw=mycolor3}]
 plot [error bars/.cd, y dir=both, y explicit, error bar style={solid, line width=1pt}, error mark options={line width=1.5pt, mark size=6.0pt, rotate=90}]
 table[row sep=crcr, y error plus index=2, y error minus index=3]{%
2	0.12014326755259	0.27045335882781	0.0852707063955383\\
25	0.110915623658115	0.279591406484391	0.0958588718981471\\
50	0.104035605159509	0.286466996408016	0.0863576919179863\\
100	0.11271966616553	0.277782783844439	0.0962449362039551\\
};
\addlegendentry{PGD}

\addplot [color=mycolor4, dashdotted, line width=1.5pt, mark size=1.5pt, mark=*, mark options={solid, fill=mycolor4, draw=mycolor4}]
 plot [error bars/.cd, y dir=both, y explicit, error bar style={solid, line width=1pt}, error mark options={line width=1.5pt, mark size=6.0pt, rotate=90}]
table[row sep=crcr, y error plus index=2, y error minus index=3]{%
2	21458657316022.8  1.55716249121865e+37	21458657316005.6\\
25	19978432548299	1.55707334079053e+37	19978432548281.8\\
50	20665714428644.7	1.55706659263591e+37	20665714428627.5\\
100	20164742403639.4	1.55707334079053e+37	20164742403622.2\\
};
\addlegendentry{GD}

\addplot [color=mycolor5, dashdotted, line width=1.5pt, mark size=1.5pt, mark=*, mark options={solid, fill=mycolor5, draw=mycolor5}]
 plot [error bars/.cd, y dir=both, y explicit, error bar style={solid, line width=1pt}, error mark options={line width=1.5pt, mark size=6.0pt, rotate=90}]
  table[row sep=crcr, y error plus index=2, y error minus index=3]{%
2	0.423894768802815	0.00421136684321516	0.00657876625284187\\
25	0.443673898622639	0.00231493583077957	0.00237878813562509\\
50	0.458692470163108	0.00249582820409339	0.00292635869630181\\
100	0.450047110616831	0.00267951576855008	0.00240056137098521\\
};
\addlegendentry{GD+NN}

\addplot+[color = mycolor4, mark=-, line width=1.5pt, mark size=6.0pt] coordinates {(2,17.19921875)};
\draw[color = mycolor4, line width=1pt] (axis cs:2,21458657316022.8) -- node[left]{} (axis cs:2,17.19921875);

\addplot+[color = mycolor4, mark=-, line width=1.5pt, mark size=6.0pt] coordinates {(25,17.19921875)};
\draw[color = mycolor4, line width=1pt] (axis cs:25,1997843254829) -- node[left]{} (axis cs:25,17.19921875);

\addplot+[color = mycolor4, mark=-, line width=1.5pt, mark size=6.0pt] coordinates {(50,17.19921875)};
\draw[color = mycolor4, line width=1pt] (axis cs:50,20665714428644.7) -- node[left]{} (axis cs:50,17.19921875);

\addplot+[color = mycolor4, mark=-, line width=1.5pt, mark size=6.0pt] coordinates {(100,17.19921875)};
\draw[color = mycolor4, line width=1pt] (axis cs:100,20164742403639.4) -- node[left]{} (axis cs:100,17.19921875);

\end{axis}
\end{tikzpicture}
        \caption{Controllers were trained using initial segments of trajectories.}
        \label{subfig:multTraj_nonlinear_quartiles}
    \end{subfigure}
    \hfill
     \begin{subfigure}[t]{0.4\textwidth}
        \centering
        \resizebox{\textwidth}{!}{
%
%
\definecolor{mycolor1}{rgb}{0.0,0.28,0.67}%
\definecolor{mycolor2}{rgb}{0.0,0.75,1.0}%
\definecolor{mycolor3}{rgb}{1.0,0.25,0}%
\definecolor{mycolor4}{rgb}{1.0,0.65,0}%
\definecolor{mycolor5}{rgb}{0.98,0.63,0.89}%
\begin{tikzpicture}

\begin{axis}[%
width=4.521in,
height=3.566in,
at={(0.758in,0.481in)},
scale only axis,
xmin=0,
xmax=102,
xlabel style={font=\color{white!15!black}},
xlabel={Number of Training Data Trajectories},
ymode=log,
ymin=1e-5,
ymax=1e+40,
ytick = {10^(-5), 10^0, 10e+4, 10e+9, 10e+14, 10e+19, 10e+24, 10e+29, 10e+34},
ylabel style={font=\color{white!15!black}},
ylabel={$\log{\text{MSE}}$},
axis background/.style={fill=white},
title style={font=\bfseries},
title={Quartiles of Controller Performance: Sliding Window},
axis x line*=bottom,
axis y line=left,
legend style={at={(0.9,0.85)}, legend cell align=left, align=left, draw=white!15!black}
]

\addplot [color=mycolor1, densely dotted, line width=1.5pt, mark size=1.5pt, mark=*, mark options={solid, fill=mycolor1, draw=mycolor1}]
 plot [error bars/.cd, y dir=both, y explicit, error bar style={solid, line width=1pt}, error mark options={line width=1.5pt, mark size=6.0pt, rotate=90}]
 table[row sep=crcr, y error plus index=2, y error minus index=3]{%
2	0.0176113723109893	0.00545116937397688	0.000134177750660078\\
25	0.0256291138841847	0.00273928405031977	0.0032059904663661\\
50	0.0186393929674663	0.0111868171392488	0.00254395803082328\\
100	0.0162123202688351	0.0100689464714574	0.00263103849858718\\
};
\addlegendentry{ICO}

\addplot [color=mycolor2, dashdotted, line width=1.5pt, mark size=1.5pt, mark=*, mark options={solid, fill=mycolor2, draw=mycolor2}]
 plot [error bars/.cd, y dir=both, y explicit, error bar style={solid, line width=1pt}, error mark options={line width=1.5pt, mark size=6.0pt, rotate=90}]
 table[row sep=crcr, y error plus index=2, y error minus index=3]{%
2	0.0232249933993789	0.0335524676003064	0.00561763659129204\\
25	0.0454310650803463	64.785451652711	0.0183308235172397\\
50	1.49006257545996	774472.46376916	1.46645998662023\\
100	2.38673089019393	1.48407992351784e+15	2.36620953787634\\
};
\addlegendentry{ICO - NC}

\addplot [color=mycolor3, densely dotted, line width=1.5pt, mark size=1.0pt, mark=*, mark options={solid, fill=mycolor3, draw=mycolor3}]
 plot [error bars/.cd, y dir=both, y explicit, error bar style={solid, line width=1pt}, error mark options={line width=1.5pt, mark size=6.0pt, rotate=90}]
 table[row sep=crcr, y error plus index=2, y error minus index=3]{%
2	0.135282505595676	0.232329652705264	0.101334295847699\\
25	0.144649975210067	0.219828986067196	0.110045009363471\\
50	0.132911116178995	0.233594550589332	0.10248843862925\\
100	0.144619849655426	0.220454115595446	0.120716079526013\\
};
\addlegendentry{PGD}

\addplot [color=mycolor4, dashdotted, line width=1.5pt, mark size=1.0pt, mark=*, mark options={solid, fill=mycolor4, draw=mycolor4}]
 plot [error bars/.cd, y dir=both, y explicit, error bar style={solid, line width=1pt}, error mark options={line width=1.5pt, mark size=6.0pt, rotate=90}]
table[row sep=crcr, y error plus index=2, y error minus index=3]{%
2	21061475429453.3	1.55732643288339e+37	21061475429436.1\\
25	20473651515310.4	2.64157822848175e+37	20473651515293.1\\
50	20793877413110.5	1.55747530963169e+37	20793877413093.3\\
100	20477934600006.3	1.55756462859205e+37	20477934599989.1\\
};
\addlegendentry{GD}

\addplot [color=mycolor5, dashdotted, line width=1.5pt, mark size=0.4pt, mark=*, mark options={solid, fill=mycolor5, draw=mycolor5}]
 plot [error bars/.cd, y dir=both, y explicit, error bar style={solid, line width=1pt}, error mark options={line width=1.5pt, mark size=6.0pt, rotate=90}]
  table[row sep=crcr, y error plus index=2, y error minus index=3]{%
2	0.462749700817854	0.0264238994201181	0.020631542323114\\
25	0.456047573957067	0.00703851522980048	0.0068824072303742\\
50	0.45980120445646	0.00424990313906171	0.00540685398554464\\
100	0.456414973649588	0.0057878911332504	0.00399484231801528\\
};
\addlegendentry{GD+NN}

\addplot+[color = mycolor4, mark=-, line width=1.5pt, mark size=6.0pt] coordinates {(2,17.19921875)};
\draw[color = mycolor4, line width=1pt] (axis cs:2,21061475429453.3) -- node[left]{} (axis cs:2,17.19921875);

\addplot+[color = mycolor4, mark=-, line width=1.5pt, mark size=6.0pt] coordinates {(25,17.296875)};
\draw[color = mycolor4, line width=1pt] (axis cs:25,20473651515310.4) -- node[left]{} (axis cs:25,17.296875);

\addplot+[color = mycolor4, mark=-, line width=1.5pt, mark size=6.0pt] coordinates {(50,17.19921875)};
\draw[color = mycolor4, line width=1pt] (axis cs:50,20793877413110.5) -- node[left]{} (axis cs:50,17.19921875);

\addplot+[color = mycolor4, mark=-, line width=1.5pt, mark size=6.0pt] coordinates {(100,17.19921875)};
\draw[color = mycolor4, line width=1pt]  (axis cs:100,20477934600006.3) -- node[left]{} (axis cs:100,17.19921875);
\end{axis}
\end{tikzpicture}}
        \caption{Controllers were trained using a sliding window of trajectories.}
        \label{subfig:segments_nonlinear_quartiles}
    \end{subfigure}
     \hspace*{\fill}
    \caption{Quartiles \ac{mse} performance of the learned controllers controlling a nonlinear when compared to the expert controller.}
    \label{fig:nonlinear_quartiles}
\end{figure}

\pagebreak

 
\printbibliography 
\end{document}